\newtheorem{lemma}{Lemma}
\newtheorem{corollary}{Corollary}
\newtheorem{theorem}{Theorem}
\newcommand{\lpsuf}{\mathit{lpsuf}}
\newcommand{\lppref}{\mathit{lppref}}
\newcommand{\dpal}{d'}
\newcommand{\bp}{\mathit{bp}}
\newcommand{\first}{\mathit{first}}
\newcommand{\second}{\mathit{second}}
\newcommand{\prefPal}{\mathit{prefPal}}
\newcommand{\ismups}{\mathsf{ismups}}
\newcommand{\zeronode}{\mathtt{0}\mbox{-node}}
\newcommand{\minusonenode}{\mathtt{-1}\mbox{-node}}
\newcommand{\lpp}{\mathit{lpp}}
\newcommand{\lps}{\mathit{lps}}
\newcommand{\len}{\mathit{len}}
\newcommand{\BegPair}{\mathit{BegPair}}
\newcommand{\sufLink}{\mathit{slink}}
\newcommand{\inSufLink}{\mathit{inSL}}
\newcommand{\rightmost}{\mathit{rm}}
\newcommand{\secondrightmost}{\mathit{srm}}
\newcommand{\node}{\mathit{node}}
\newcommand{\updatebp}{\mathit{update\_bp}}
\newcommand{\DPal}{\mathsf{DPal}}
\newcommand{\MUPS}{\mathsf{MUPS}}
\newcommand{\eertree}{\mathsf{eertree}}
\DeclareMathOperator{\pal}{\mathit{pal}}
\begin{document}
\title{Palindromic Trees for a Sliding Window \\ and Its Applications}
\author[1,2]{Takuya~Mieno}
\author[1]{Kiichi~Watanabe}
\author[1]{Yuto~Nakashima}
\author[1,3]{Shunsuke~Inenaga}
\author[4]{Hideo~Bannai}
\author[1]{Masayuki~Takeda}
\affil[1]{Department of Informatics, Kyushu University.

\texttt{\{takuya.mieno,kiichi.watanabe,
yuto.nakashima,inenaga,takeda\}@inf.kyushu-u.ac.jp}}
\affil[2]{Japan Society for the Promotion of Science.}
\affil[3]{PRESTO, Japan Science and Technology Agency,}
\affil[4]{M\&D Data Science Center, Tokyo Medical and Dental University.

\texttt{hdbn.dsc@tmd.ac.jp}}
\date{}
\maketitle
\begin{abstract}
  The palindromic tree~(a.k.a. eertree) for a string $S$ of length $n$ is a tree-like data structure
  that represents the set of all distinct palindromic substrings of $S$, using $O(n)$ space~[Rubinchik and Shur, 2018].
  It is known that, when $S$ is over an alphabet of size $\sigma$ and is given in an online manner,
  then the palindromic tree of $S$ can be constructed in $O(n\log\sigma)$ time with $O(n)$ space.
  In this paper, we consider the sliding window version of the problem:
  For a sliding window of length at most $d$,
  we present two versions of an algorithm which maintains the palindromic tree of size $O(d)$ for every sliding window $S[i..j]$ over $S$,
  where $1 \leq j-i+1 \leq d$.
  The first version works in $O(n\log\sigma')$ time with $O(d)$ space where $\sigma' \leq d$ is the maximum number of distinct characters in the windows,
  and the second one works in $O(n + d\sigma)$ time with $(d+2)\sigma + O(d)$ space.
  We also show how our algorithms can be applied to efficient computation of minimal unique palindromic substrings (MUPS)
  and minimal absent palindromic words (MAPW) for a sliding window.
\end{abstract}
\section{Introduction}
\paragraph*{\bf Palindromes.}
A \emph{palindrome} is a string that reads the same forward and backward.
Palindromic structures in strings have been heavily studied
in the fields of string processing algorithms and combinatorics on strings~\cite{manacher1975new,groult2010counting,KosolobovRS13,rubinchik2018eertree,fici2014subquadratic,bannai2018diverse}.
One of the most famous results related to palindromic structures is
Manacher's algorithm~\cite{manacher1975new},
which computes all maximal palindromes in a given string $S$.
Manacher's algorithm essentially computes
all palindromes in $S$, since any palindromic substring of $S$
is a substring of some maximal palindrome in $S$.
Another interesting topic is
enumeration of distinct palindromes in a string.
It is known that any string of length $n$ contains
at most $n+1$ distinct palindromes including the empty string~\cite{droubay2001episturmian}.
Groult et al.~\cite{groult2010counting} proposed an $O(n)$-time algorithm
which enumerates all distinct palindromes in a given string
of length $n$ over an integer alphabet of size $\sigma = n^{O(1)}$.
For the same problem in the \emph{online} model,
Kosolobov et al.~\cite{KosolobovRS13} proposed
an $O(n\log\sigma)$-time and $O(n)$-space algorithm for a general ordered alphabet.
Kosolobov et al.'s algorithm is a combination of Manacher's algorithm and
Ukkonen's online suffix tree construction algorithm~\cite{ukkonen1995line}.
Rubinchik and Shur~\cite{rubinchik2018eertree} proposed
a new data structure called \emph{eertree},
which permits efficient access to distinct palindromes in a string
\emph{without} storing the string itself.
Eertrees can be utilized for solving problems related to palindromic structures,
such as the palindrome counting problem and
the palindromic factorization problem~\cite{rubinchik2018eertree}.
The size of the eertree of $S$ is linear in the number
$p_S$ of distinct palindromes in $S$~\cite{rubinchik2018eertree}.
It is known that $p_S$ is at most $|S|+1$,
and that it can be much smaller than the length $|S|$ of the string, e.g., for $S = (\mathtt{abc})^{n/3}$,
$p_S = 4$ since all distinct palindromes in $S$ are $\mathtt{a}$, $\mathtt{b}$, $\mathtt{c}$, and the empty string.
Thus, the size of the eertree of $S$ can be much smaller than
that of the suffix tree of $S$ which is $\Theta(n)$.
Therefore, it is of significance if one can build
eertrees without suffix trees.
Rubinchik and Shur~\cite{rubinchik2018eertree} indeed
proposed an online eertree construction algorithm
running in $O(n\log\sigma)$ time without suffix trees.

Recently, a concept of palindromic structures
called \emph{minimal unique palindromic substrings}~(\emph{MUPS}) is introduced
by Inoue et al.~\cite{inoue2018algorithms}.
A palindromic substring $w = S[i.. j]$ of a string $S$ is called a MUPS of $S$
if $w$ occurs in $S$ exactly once, and
$S[i+1.. j-1]$ occurs at least twice in $S$.
MUPSs are utilized for solving
the \emph{shortest unique palindromic substring} (\emph{SUPS})
problem~\cite{inoue2018algorithms},
which is motivated by an application in molecular biology.
Watanabe et al.~\cite{WatanabeNIBT20} proposed
an algorithm to solve the SUPS problem
based on the \emph{run-length encoding}~(\emph{RLE}) version of eertrees,
named $\mathtt{e^2rtre^2}$.

\paragraph*{\bf Sliding Window Model.}
In this paper, we consider the problems of computing palindromic structures for the \emph{sliding window} model.
The sliding window model is a natural generalization of the online model,
and the assumptions of this model are natural when we need to process a massive or a streaming string data with a limited memory space.
A typical and classical application to the sliding window model is data compression,
such as Lempel-Ziv 77~(the original version)~\cite{ziv1977universal} and PPM~\cite{cleary1984data}.
Note that sliding-window Lempel-Ziv 77 is an immediate application of suffix trees for a sliding window,
which can be maintained in $O(n\log\sigma')$ time using $O(d)$ space~\cite{fiala1989data,larsson1996extended,senft2005suffix}
where $d$ is the size of the window and $\sigma' \le d$ is the maximum number of distinct characters in every window.
Recently, several algorithms for computing substrings for a sliding window 
with certain interesting properties are proposed:
For instance, Crochemore et al.~\cite{crochemore2020absent} introduced
the problem of computing \emph{minimal absent words}~(\emph{MAWs}) for a sliding window, and proposed
an $O(n\sigma)$-time and $O(d\sigma)$-space algorithm
using suffix trees for a sliding window.
Mieno et al.~\cite{MienoKAFNIBT20} proposed an algorithm for computing \emph{minimal unique substrings}~(\emph{MUSs})~\cite{ilie2011minimum}
for a sliding window, in $O(n\log\sigma')$-time and $O(d)$ space,
again based on suffix trees for a sliding window.

\paragraph*{\bf Our Contributions.}
In this paper, we consider the problem of maintaining eertrees for the sliding window model, that is,
given a string $S$ of length $n$ and a \emph{window} of a fixed size $d$, we maintain eertrees of substrings $S[i.. i+d-1]$ for incremental $i = 0, 1, \ldots, n-d$.
Also, we consider the problem of maintaining MUPSs for a sliding window.
In addition, we introduce a new concept of palindromic structures called \emph{minimal absent palindromic words}~(\emph{MAPW}),
and consider the problem of maintaining MAPWs for a sliding window.
A string $w$ is called a MAPW of string $S$ if $w$ is a palindrome, $w$ does not occur in $S$, and $w[1..|w|-2]$ occurs in $S$.
MAPWs can be seen as a palindromic version of the notion of MAWs, which are extensively studied in the fields of string processing and
bioinformatics~\cite{crochemore2000data,mignosi2002words,chairungsee2012using,ota2014universal,fujishige2016computing}.

In this paper, we propose an algorithm which maintains eertrees for a sliding window in a total of $O(n\log\sigma')$ time using $O(d)$ space.
We then give an alternative eertree construction algorithm for a sliding window which runs in $O(n + d\sigma)$ time with $(d+2)\sigma + O(d)$ space.
As applications to the aforementioned result, we propose an algorithm which maintains MUPSs for a sliding window in a total of $O(n\log\sigma')$ time using $O(d)$ space, and
an algorithm which maintains MAPWs for a sliding window in a total of $O(n + d\sigma)$ time using $O(d\sigma)$ space.
We emphasize that our algorithms are stand-alone in the sense that they do not use suffix trees,
while the majority of existing efficient sliding window algorithms~(see above) make heavy use of suffix trees.

\section{Preliminaries} \label{sec:pre}
\subsection{Strings}
Let $\Sigma$ be an \emph{alphabet} of size $\sigma$.
An element of $\Sigma$ is called a \emph{character}.
An element of $\Sigma^*$ is called a \emph{string}.
The length of a string $S$ is denoted by $|S|$.
The \emph{empty string} $\varepsilon$ is the string of length 0.
If $S = xyz$, then $x$, $y$, and $z$ are called
a \emph{prefix}, \emph{substring}, and \emph{suffix} of $S$, respectively.
They are called a \emph{proper} prefix, \emph{proper} substring,
and \emph{proper} suffix of $S$
if $x \ne S$, $y \ne S$, and $z \ne S$, respectively.
If a non-empty string $x$ is both a proper prefix and a proper suffix of $S$,
then $x$ is called a \emph{border} of $S$.
For any  $0 \le i \le |S|-1$,
$S[i]$ denotes the $i$-th character of $S$.
For any  $0 \le i \le j \le |S|-1$,
$S[i.. j]$ denotes the substring of $S$
starting at position $i$ and ending at position $j$,
i.e., $S[i.. j] = S[i]S[i+1]\cdots S[j]$.
For convenience, $S[i.. j] = \varepsilon$ for any $i > j$.
A string $S$ is called a \emph{palindrome}
if $S[i] = S[|S|-i-1]$ for every $0 \le i \le |S|-1$.
Note that the empty string is a palindrome.
A substring $S[i.. j]$ of $S$ is said to be a \emph{palindromic substring} of $S$
if $S[i.. j]$ is a palindrome.
The \emph{center} of a palindromic substring $S[i.. j]$ of $S$ is $\frac{i+j}{2}$.
A palindromic substring $S[i.. j]$ of $S$ is \emph{maximal}
if $i = 0$, $j = |S|-1$, or $S[i-1..j+1]$ is not a palindrome.
We denote by $\lpp(S)$~(resp. $\lps(S)$) the longest palindromic prefix~(resp. suffix) of $S$.
We denote by $\DPal(S)$ the set of all distinct palindromes in $S$.
It is known that $|\DPal(S)| \le |S|+1$~\cite{droubay2001episturmian}.
For any non-empty strings $S$ and $w$, $w$ is said to be \emph{unique} in $S$ if $w$ occurs in $S$ exactly once.
Also, $w$ is said to be \emph{repeating} in $S$ if $w$ occurs in $S$ at least twice.
For convenience, we define that the empty string $\varepsilon$ is repeating in any string.
In what follows, we consider an arbitrary fixed string $S$ of length $n > 0$.
\subsection{Eertrees (Palindromic Trees)}
The \emph{eertree} of $S$ denoted by $\eertree(S)$ is a tree-like data structure
that enables us to efficiently access each of the distinct palindromes in $S$~\cite{rubinchik2018eertree}.
The $\eertree(S)$ consists of $m$ \emph{ordinary nodes}
and two auxiliary nodes, denoted \emph{$\zeronode$} and \emph{$\minusonenode$},
where $m = |\DPal(S)| -1 $.
Each ordinary node corresponds to
each element of $\DPal(S)\setminus\{\varepsilon\}$.
For each ordinary node $v$,
we denote by $\pal(v)$ the palindrome which corresponds to $v$,
and by $\len(v)$ its length.
Conversely, for each non-empty palindromic substring $p$ of $S$,
we denote by $\node(p)$ the node which corresponds to the palindrome $p$.
Namely, $\node(\pal(v)) = v$ for each ordinary node $v$.
For convenience,
we define $\pal(\zeronode) = \pal(\minusonenode) = \varepsilon$,
$\len(\zeronode) = 0$, and $\len(\minusonenode) = -1$.
For any nodes $u, v$ in $\eertree(S)$,
there is an edge $(u, v)$ if and only if
$\len(u) + 2 = \len(v)$ and
$\pal(u) = \pal(v)[1..\len(v)-2]$.
Each edge $(u, v)$ is labeled by a character $\pal(v)[0]$.
Also, each node $v$ in $\eertree(S)$ has a \emph{suffix link}
denoted by $\sufLink(v)$.
For each node $v$ in $\eertree(S)$ with $\len(v) \ge 2$,
$\sufLink(v)$ points to the node corresponding to the longest palindromic proper suffix of $\pal(v)$.
For each node $v$ in $\eertree(S)$ with $\len(v) = 1$,
$\sufLink(v)$ points to the $\zeronode$.
Also, $\sufLink(\zeronode) = \minusonenode$ and
$\sufLink(\minusonenode) = \minusonenode$.
For each node $v$ in $\eertree(S)$,
$\inSufLink(v) = |\{u\mid \sufLink(u) = v\}|$
denotes the number of incoming suffix links of $v$.
See Fig.~\ref{fig:eertree} for an example of $\eertree(S)$.
\begin{figure}[t]
  \centering
  \includegraphics[width=0.7\linewidth]{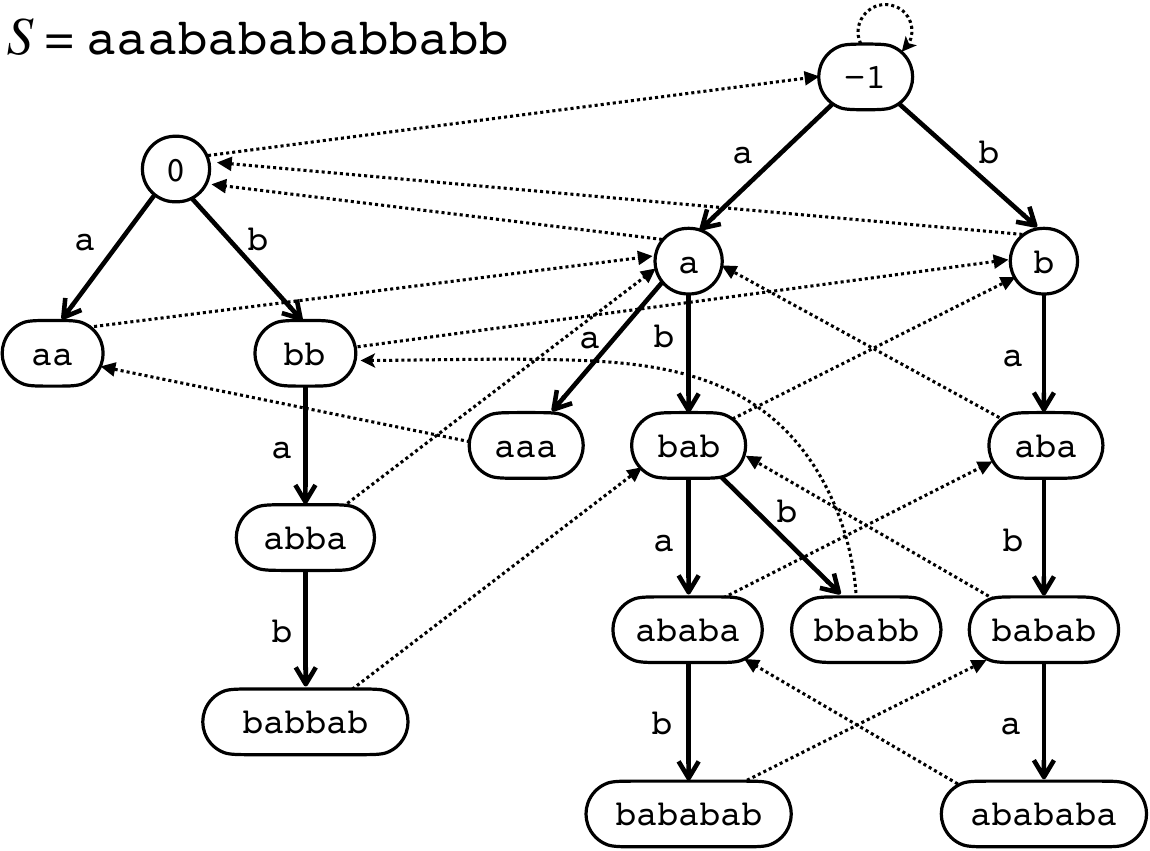}
  \caption{
    The eertree of $S = \mathtt{aaababababbabb}$.
    The solid and broken arrows represent edges and suffix links, respectively.
    Note that $\pal(v)$ is written inside each node $v$ in this figure,
    however, it is for only explanation.
    Namely, each node does not explicitly store the corresponding string.
  } \label{fig:eertree}
\end{figure}

Note that
each node $v$ does not store the string $\pal(v)$ explicitly.
Instead, we can obtain $\pal(v)$
by traversing edges backward, from $v$ to the root,
since $\pal(u) = c\pal(u')c$
for each node $u$ with $|\pal(u)| \ge 2$
where $u'$ is the parent of $u$
and $c$ is the label of the edge $(u',u)$.
Each node only stores pointers to its children
and a constant number of integers.
Thus, the size of $\eertree(S)$ is linear in
the number of nodes, i.e., $O(|\DPal(S)|)$.
It is known that $\eertree(S)$ can be constructed in
$O(n\log\sigma)$ time for any string $S$
given in an online manner~\cite{rubinchik2018eertree}.

\subsection{Sliding Window} \label{subsec:slidingwindow}
We formalize sliding windows over string $S$.
For each time $t = 0, 1, \ldots$, we consider the substring $S[i_t.. j_t]$ called \emph{the window at time $t$}.
The windows must satisfy the following conditions:
(1) $i_0 = j_0 = 0$ for the initial window at time $0$; and
(2) $0 \le i_t \le j_t \le n-1$ and either $(i_t, j_t) = (i_{t-1}+1, j_{t-1})$
or $(i_t, j_t) = (i_{t-1}, j_{t-1}+1)$ for every time $t > 0$.
In other words, the second condition means that we can either
\emph{delete} the leftmost character from the current window, or
\emph{append} a character to the right end of the current window at each time.

Given a sequence of windows~(or equivalently, a sequence of delete / append operations),
the aim of our sliding window model is processing the windows in space proportional to the size of each window.
This paper mainly deals with the problem of maintaining eertrees with respect to a sequence of windows over a given string $S$.

\section{Combinatorial Properties on Palindromes for a Sliding Window} \label{sec:comb}

In this section, we show some combinatorial properties
on palindromes for a sliding window, which is helpful for designing
efficient algorithms to maintain eertrees for a sliding window.
Since the nodes of the eertree of a string represent
all distinct palindromes in the string, we obtain the next lemma.
\begin{lemma} \label{lem:removed_node}
  There is a node $\ell$ in $\eertree(S[i-1.. j-1])$ to be removed
  when the leftmost character $S[i-1]$ is deleted from $S[i-1.. j-1]$
  if and only if
  (A) $\pal(\ell)$ is unique in $S[i-1.. j-1]$,
  (B) $\pal(\ell) = \lpp(S[i-1.. j-1])$, and
  (C) $\ell$ is a leaf node.
\end{lemma}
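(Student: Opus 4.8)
The plan is to reduce the whole statement to an assertion about occurrences. A palindrome $p$ belongs to $\eertree(S[i-1..j-1])$ exactly when it occurs in $S[i-1..j-1]$, so the node of $p$ is removed precisely when $p$ occurs in $S[i-1..j-1]$ but no longer in $S[i..j-1]$. Since $i-1$ is the leftmost position of the window, every occurrence $S[a..b]$ satisfies $a \ge i-1$, and such an occurrence survives the deletion iff $a \ge i$. Hence the node of $p$ is removed iff all occurrences of $p$ start at position $i-1$; because a fixed string occurs at most once at a given starting position, this is the same as saying $p$ occurs exactly once and that occurrence is a prefix of $S[i-1..j-1]$. I would establish this equivalence first, as it is the backbone of both directions.

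For the forward direction I would assume $\ell$ is removed. The observation above immediately yields that $\pal(\ell)$ is unique in $S[i-1..j-1]$, giving (A), and that $\pal(\ell)$ is a palindromic prefix. To strengthen ``palindromic prefix'' to (B), I would invoke the elementary fact that a palindrome which is a prefix of a longer palindrome is also its suffix: if there were a palindromic prefix $q$ with $|q| > |\pal(\ell)|$, then $\pal(\ell)$ would occur both at position $i-1$ and as a suffix of $q$, i.e.\ at position $i-1+(|q|-|\pal(\ell)|) > i-1$, contradicting uniqueness. For the leaf condition (C), I would use that a child of $\ell$ corresponds to a palindrome of the form $c\,\pal(\ell)\,c$ occurring in the window; an occurrence of it at position $b$ forces $\pal(\ell)$ to occur at position $b+1$, so uniqueness gives $b+1=i-1$, i.e.\ $b=i-2$, which lies outside the window, a contradiction.

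For the converse I would assume (A), (B), (C). By (B), $\pal(\ell)$ occurs as a prefix, hence at position $i-1$, and by (A) this is its only occurrence; deleting $S[i-1]$ therefore destroys this occurrence and leaves $\pal(\ell)$ absent from $S[i..j-1]$, so $\ell$ is removed. Note that the argument used for (C) shows that (C) is already implied by (A) and (B), so the converse in fact needs only (A) and (B); condition (C) simply records the easily checkable shape of the removed node inside the tree, which is what makes the characterization useful algorithmically.

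I expect the main obstacle to be the leaf condition (C): precisely pinning down the correspondence between children of $\ell$ and occurrences of palindromes $c\,\pal(\ell)\,c$, and correctly handling the boundary position $i-2$. The uniqueness-to-longest-prefix step (B) is the other delicate point, but it follows cleanly once the prefix/suffix property of palindromes is isolated, which I would prove in one line from the definition of a palindrome.
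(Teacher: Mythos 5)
Your proposal is correct and follows essentially the same route as the paper's proof: both directions rest on the observation that removal is equivalent to $\pal(\ell)$ occurring only as the prefix at position $i-1$, condition~(B) is obtained via the fact that a palindromic proper prefix of a palindrome is also its suffix (hence a second occurrence), and condition~(C) follows because a child would force a non-prefix occurrence of $\pal(\ell)$. Your version merely spells out the occurrence bookkeeping (the positions $i-1+|q|-|\pal(\ell)|$ and $b+1$) slightly more explicitly than the paper does.
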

\begin{proof}
  ($\Rightarrow$)
  (A) Since $\ell$ is removed, $\pal(\ell)$ does not occur in $S[i.. j-1]$.
  Thus, $\pal(\ell)$ occurs in $S[i-1.. j-1]$ only as a prefix, i.e.,
  $\pal(\ell)$ is unique in $S[i-1.. j-1]$.
  (B) Assume that $\pal(\ell)$ is shorter than $\lpp(S[i-1.. j-1])$.
  Then, $\pal(\ell)$ is a proper prefix of $\lpp(S[i-1.. j-1])$.
  Also, $\pal(\ell)$ is a proper suffix of $\lpp(S[i-1.. j-1])$
  since $\lpp(S[i-1.. j-1])$ is a palindrome.
  This contradicts that $\pal(\ell)$ is unique in $S[i-1.. j-1]$.
  Thus, $\pal(\ell) = \lpp(S[i-1.. j-1])$.
  (C) If we assume that $\ell$ has a child, then
  $\pal(\ell)$ has an occurrence in $S[i-1..j-1]$
  that is not a prefix of $S[i-1..j-1]$, a contradiction.
  \noindent($\Leftarrow$)
  Since $\pal(\ell)$ is a palindromic prefix of $S[i-1.. j-1]$
  and unique in $S[i-1.. j-1]$, $\pal(\ell)$ does not occur in $S[i.. j-1]$.
  Thus, $\ell$ is removed when $S[i-1]$ is deleted.
\end{proof}

Namely, when the leftmost character of the window is deleted, at most one leaf will be removed from the eertree.
Also, in order to detect such a leaf, we need to compute the longest palindromic prefix of each window and to determine its uniqueness.
In the following, we show some combinatorial properties on unique palindromes and the longest palindromic prefix for a sliding window.

\subsection{Unique Palindromes for a Sliding Window}
A palindromic substring $w$ of string $S$ is said to be \emph{left-maximal} in $S$
if there is no palindromic substring of $S$ which contains $w$ as a proper suffix.
See Fig.~\ref{fig:left_maximal} for examples.
If a palindrome $w$ is not left-maximal in $S$, then for some palindrome $w'$, $w$ is a proper suffix and prefix of $w'$, i.e., $w$ is not unique in $S$.
In other words, any unique palindromic substring must be left-maximal.
\begin{figure}[h]
  \centering
  \includegraphics[width=0.6\linewidth]{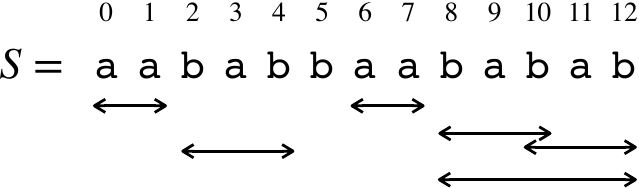}
  \caption{
    For string $S = \mathtt{aababbaababab}$, its palindromic substring $\mathtt{aa}$ is left-maximal in $S$.
    On the other hand, $\mathtt{bab}$ is not left-maximal in $S$
    since there is a palindromic substring $S[8.. 12] = \mathtt{babab}$ of $S$ which contains $\mathtt{bab}$ as a proper suffix.
  } \label{fig:left_maximal}
\end{figure}
\begin{lemma}\label{lem:leftmaximal}
  For any time $t$ and any left-maximal palindromic substring $w$ of $S[i_t.. j_t]$,
  there exists time $t' < t$ which satisfies one of the followings:
  \begin{enumerate}
    \item the longest palindromic suffix of $S[i_{t'}.. j_{t'}]$ is $w$, or
    \item the longest palindromic suffix of $\lpp(S[i_{t'}.. j_{t'}])$ is $w$.
  \end{enumerate}
\end{lemma}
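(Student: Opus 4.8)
The plan is to fix a single occurrence of $w$ inside the window at time $t$ and to locate the precise moment at which that occurrence first becomes left-maximal. Write $w = S[p.. q]$ with $i_t \le p \le q \le j_t$. Since each operation either increments the left endpoint or the right endpoint, both $i_s$ and $j_s$ are non-decreasing in $s$; in particular $i_s \le i_t \le p$ for every $s \le t$. Let $t_q \le t$ be the first time the right endpoint equals $q$ (the append of $S[q]$). Then for every $s \in [t_q, t]$ we have $i_s \le p \le q \le j_s$, so this occurrence of $w$ lies in $S[i_s.. j_s]$ throughout $[t_q, t]$ and $w$ is a palindromic suffix of $S[i_s.. q]$. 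Let $t_w \in [t_q, t]$ be the first time at which $w$ is left-maximal in $S[i_{t_w}.. j_{t_w}]$; this is well defined since $w$ is left-maximal at time $t$.

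First I would establish a dichotomy on the operation performed at time $t_w$. The key observation is that an append cannot turn $w$ from non-left-maximal into left-maximal: if some palindrome $u$ having $w$ as a proper suffix occurs in the window, appending a character on the right deletes neither $u$ nor $w$, so $w$ stays non-left-maximal. Hence either $t_w = t_q$ (so $w$ is created as a palindromic suffix by the append of $S[q]$), or the operation at $t_w$ is a deletion. In the first case $w$ is a palindromic suffix of $S[i_{t_q}.. q]$ and is left-maximal, so no palindromic suffix of $S[i_{t_q}.. q]$ is longer than $w$; therefore $w = \lps(S[i_{t_q}.. j_{t_q}])$, which is condition~(1) at $t' = t_q$.

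The substantive case is a deletion at $t_w$, removing the leftmost character $S[i_{t_w}-1]$ of the window at time $t_w-1$ (whose right endpoint is unchanged). Because $w$ is not left-maximal at $t_w-1$ but is left-maximal at $t_w$, at least one palindrome $u$ with $w$ as a proper suffix occurs in $S[i_{t_w}-1.. j_{t_w-1}]$, and every such $u$ must contain the deleted position $i_{t_w}-1$, hence must start exactly there; thus each such $u$ is a palindromic prefix of the window at time $t_w-1$. Taking $L := \lpp(S[i_{t_w}-1.. j_{t_w-1}])$, palindrome symmetry shows that $w$ is a palindromic suffix of $L$: the length-$|u|$ suffix of the palindrome $L$ equals the palindrome $u$ (as $u$ is a prefix-palindrome of $L$), and $u$ ends with $w$. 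I then claim $w$ is the longest palindromic \emph{proper} suffix of $L$, i.e.\ $\sufLink(\node(L)) = \node(w)$, which is how I read condition~(2) through the suffix link. Indeed, any palindromic suffix $v$ of $L$ with $|w| < |v| < |L|$ is a proper suffix of $L$, so it starts strictly to the right of $i_{t_w}-1$ and survives the deletion; but then $v$ is a palindrome lying in $S[i_{t_w}.. j_{t_w}]$ with $w$ as a proper suffix, contradicting the left-maximality of $w$ at $t_w$. Hence condition~(2) holds at $t' = t_w-1 < t$.

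The main obstacle is precisely this deletion case: the crux is combining the \emph{survival} argument (a proper palindromic suffix of $L$ is not erased by deleting the leftmost character) with \emph{palindrome symmetry} (transferring the prefix structure forced by $u$ into a suffix structure of $L$), which together pin $w$ down as the suffix-link target of $\lpp$. Two minor points remain. First, since $\lpp(\cdot)$ is itself a palindrome, I interpret ``longest palindromic suffix of $\lpp(\cdot)$'' as the longest palindromic \emph{proper} suffix, consistent with the suffix-link viewpoint used throughout; under the literal reading condition~(2) would collapse to $w = \lpp(\cdot)$, which fails in the deletion case and so cannot be intended. Second, the strict inequality $t' < t$ is automatic in the deletion case and whenever $t_q < t$; the only delicate situation is $t_w = t_q = t$ (for instance when $w$ is created by the final append and occurs only as a suffix of the whole window), where condition~(1) holds at $t' = t$. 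This boundary is compatible with the statement if the witness time may equal $t$, and is otherwise excluded in the intended application, where $w$ is not the current longest palindromic suffix.
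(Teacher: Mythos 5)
Your proof is correct, but it takes a genuinely different route from the paper's. The paper argues by contradiction: assuming neither condition ever held, it constructs the \emph{shortest} palindrome $v$ ending at $e$ that is longer than $w$ (so that $w=\lps(v)$), then the \emph{longest} palindromic prefix $u$ of a past window starting where $v$ starts, and derives a contradiction from the position $s''$ of $v$ as a suffix of $u$ --- either producing a palindromic border of $v$ strictly between $w$ and $v$ in length (killing minimality of $v$), or exhibiting an occurrence of $v$ inside the current window (killing left-maximality of $w$). Your argument is direct and constructive: you pinpoint the first time $t_w$ at which $w$ becomes left-maximal and case-split on the operation performed then, observing that an append cannot create left-maximality, so the witnessing event is either the append that first realizes $w$ as a window suffix (giving condition~1) or a deletion whose removed position must be the starting position of every palindrome covering $w$, which forces $w$ to be the longest proper palindromic suffix of $\lpp$ of the pre-deletion window (giving condition~2). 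Your approach buys a tighter correspondence with the algorithm --- it identifies exactly which of the two monitored events ``discovers'' $w$, which is really what Algorithms~2 and~3 rely on --- at the cost of having to argue the monotonicity/survival facts about windows explicitly; the paper's approach avoids reasoning about the time axis beyond existence of certain windows, but needs the border-of-a-palindrome symmetry argument. The two caveats you flag (reading ``longest palindromic suffix of $\lpp(\cdot)$'' as the longest \emph{proper} such suffix, and the boundary case where the only admissible witness is $t'=t$, e.g.\ a palindrome newly created by the current append) are genuine imprecisions of the lemma as stated, not defects of your argument: the paper's own proof implicitly uses the proper-suffix reading, and its algorithmic use of the lemma handles the current $\lps$ at the current step, consistent with $t'\le t$.
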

\begin{proof}
  Let $w = S[s..e]$.
  For the sake of contradiction, we assume the contrary.
  Namely, for every time $t' < t$,
  neither the longest palindromic suffix of $S[i_{t'}.. j_{t'}]$ nor
  the longest palindromic suffix of $\lpp(S[i_{t'}.. j_{t'}])$ is equal to $S[s..e]$.
  Consider a window in the past such that its ending position is $e$.
  Since the longest palindromic suffix of the window is not $S[s..e]$,
  there is another palindromic suffix ending at $e$ which is longer than $S[s.. e]$.
  Now let $v = S[s'..e]$ be the shortest one among such palindromic suffixes.
  Then, the longest palindromic suffix of $v$ is $w$.
  Next, consider a window $S[i_{\tilde{t}}.. j_{\tilde{t}}]$ at time $\tilde{t} < t$ with its starting position $i_{\tilde{t}} = s'$.
  If $v$ is the longest palindromic prefix of the window $S[i_{\tilde{t}}.. j_{\tilde{t}}]$,
  then $w$ becomes the longest palindromic suffix of $v = \lpp(S[i_{\tilde{t}}.. j_{\tilde{t}}])$, however, it contradicts our assumption.
  Thus, there is another palindromic prefix starting at $s'$ which is longer than $v$.
  Now let $u = S[s'..e']$ be the longest one among such palindromic prefixes.
  Since $v$ is a palindromic proper prefix of $u$, it is also a palindromic proper suffix of $u$.
  Let $s''$ be the starting position of $v$ which occurs as a suffix of $u$.
  We further consider two sub-cases~(see also Fig.~\ref{fig:left_maximal_contradiction}):
  \begin{itemize}
    \item[(a)] If $s'' < s$, then the border $S[s''.. e]$ of $v$ is a palindrome.
      Also, $S[s''.. e]$ is longer than $S[s..e]$ and is shorter than $S[s'..e]$.
      This contradicts that the minimality of $v = S[s'.. e]$.
    \item[(b)] If $s'' \ge s$, then $i_t \le s \le s''$ and $e' \le j_{\tilde{t}} \le j_t$,
      and thus, $v = S[s''..e']$ is a substring of the current window $S[i_t.. j_t]$.
      This contradicts that $w$ is left-maximal in $S[i_t.. j_t]$.
  \end{itemize} 
  Therefore, we have proved the lemma.
\end{proof}
\begin{figure}[h]
  \centering
  \includegraphics[width=\linewidth]{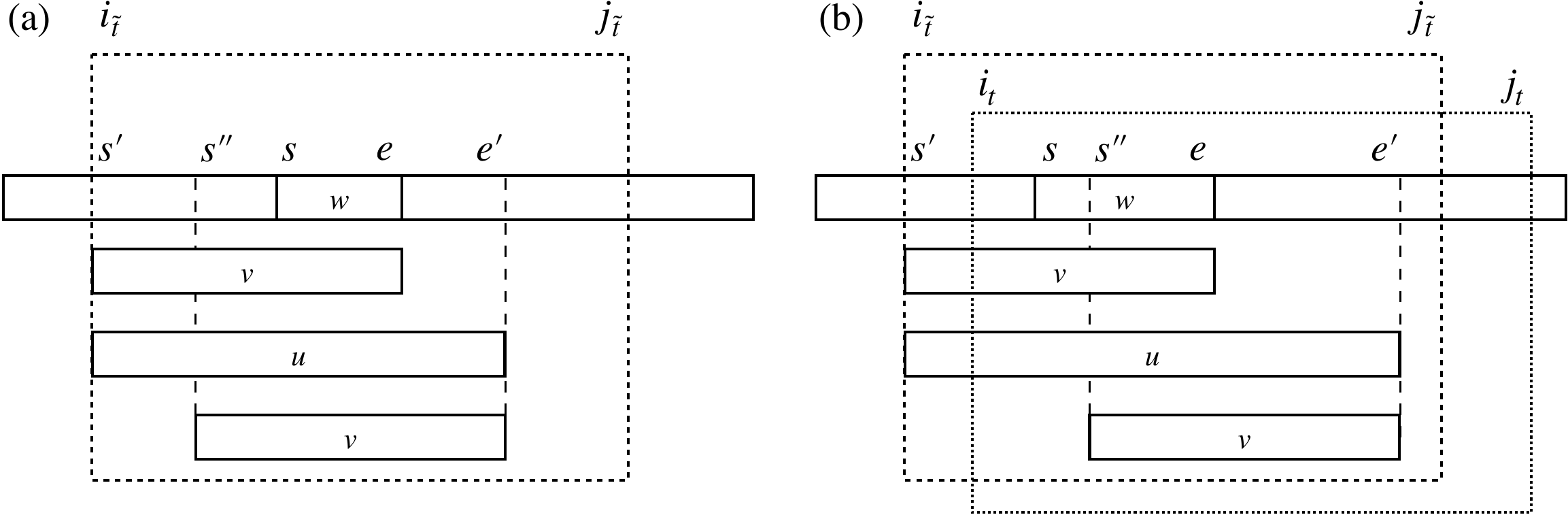}
  \caption{
    Illustration for contradictions in the proof of Lemma~\ref{lem:leftmaximal}.
  } \label{fig:left_maximal_contradiction}
\end{figure}
As mentioned before, any unique palindrome is left-maximal.
Thus, Lemma~\ref{lem:leftmaximal} is useful for maintaining uniqueness of palindromes for a sliding window.

\subsection{Longest Prefix Palindrome for a Sliding Window}
Next, we consider the longest palindromic prefixes for sliding windows.
\begin{lemma}\label{lem:longestprefixpal}
  Let $w$ be the longest palindromic prefix of the window $S[i_t.. j_t]$ at time $t$.
  There exists time $t' < t$ which satisfies one of the followings:
  \begin{enumerate}
    \item the longest palindromic suffix of $S[i_{t'}.. j_{t'}]$ is $w$, or
    \item the longest palindromic suffix of $\lpp(S[i_{t'}.. j_{t'}])$ is $w$.
  \end{enumerate}
\end{lemma}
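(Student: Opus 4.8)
The plan is to reduce this statement directly to Lemma~\ref{lem:leftmaximal}, whose conclusion is word-for-word identical to the one required here. The only gap between the two statements lies in the hypothesis: Lemma~\ref{lem:leftmaximal} takes $w$ to be an arbitrary left-maximal palindromic substring of the window, whereas here $w$ is specifically the longest palindromic prefix. So my first and essentially only step is to establish that the longest palindromic prefix of any window is automatically left-maximal in that window.

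That claim follows at once from the position of $w$. Writing $w = \lpp(S[i_t.. j_t])$, the palindrome $w$ begins at the left boundary $i_t$ of the window. If $w$ were not left-maximal, then by definition there would exist a palindromic substring $v$ of $S[i_t.. j_t]$ having $w$ as a proper suffix; such a $v$ would end at the same position as $w$ but begin strictly to the left of $i_t$. This is impossible, since every substring of the window starts at a position $\ge i_t$. Hence no palindromic substring of the window can contain $w$ as a proper suffix, i.e., $w$ is left-maximal in $S[i_t.. j_t]$.

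With left-maximality established, I would simply invoke Lemma~\ref{lem:leftmaximal} applied to $w$ at the current time $t$, which immediately produces a time $t' < t$ satisfying the stated dichotomy between conditions~1 and~2.

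I do not anticipate a genuine obstacle here. The substantive combinatorial work---the two-case contradiction argument that traces $w$ back through an earlier occurrence and an extending palindromic prefix, splitting on whether the relocated copy of $v$ falls inside or outside the current window---is already carried out in the proof of Lemma~\ref{lem:leftmaximal}. The present statement is thus essentially a corollary, and the only point requiring care is the (trivial) boundary observation that a prefix of the window cannot be extended leftward while remaining a substring of the window.
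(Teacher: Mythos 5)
There is a genuine gap: your reduction rests on the claim that $\lpp(S[i_t..j_t])$ is automatically left-maximal in the window, and that claim is false. Left-maximality, as defined in the paper (and as the example in Fig.~\ref{fig:left_maximal} makes clear), is a property of the palindrome \emph{as a string}: $w$ fails to be left-maximal as soon as \emph{any} palindromic substring of the window, occurring \emph{anywhere}, contains $w$ as a proper suffix. Your argument only rules out extending the \emph{prefix occurrence} of $w$ to the left, which is indeed impossible, but it says nothing about other occurrences of $w$ inside the window. Concretely, take the window $W = \mathtt{abaccababa}$. Then $\lpp(W) = \mathtt{aba}$, yet $W[5..9] = \mathtt{ababa}$ is a palindromic substring of $W$ containing $\mathtt{aba}$ as a proper suffix, so $\mathtt{aba}$ is not left-maximal in $W$. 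Hence Lemma~\ref{lem:leftmaximal} cannot be invoked, and the statement is not a corollary of it.

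This is precisely why the paper proves Lemma~\ref{lem:longestprefixpal} separately. Its proof reuses the same opening setup (the shortest longer palindrome $v$ ending at $e$, and the longest palindromic prefix $u$ of the window starting at $s'$), but then splits into three cases by comparing the centers $c_w$, $c_v$, $c_u$; in cases (b) and (c) the contradiction comes from producing a palindromic \emph{prefix} of the current window longer than $w$, which contradicts $w = \lpp(S[i_t..j_t])$. That is the extra leverage that substitutes for the left-maximality you do not have. To repair your proof you would need to carry out this center-based case analysis (or an equivalent argument) rather than appeal to Lemma~\ref{lem:leftmaximal}.
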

\begin{proof}
  Let $w = S[s.. e]$. For the sake of contradiction, we assume the contrary.
  Namely, for every time $t' < t$,
  neither the longest palindromic suffix of $S[i_{t'}.. j_{t'}]$ nor
  the longest palindromic suffix of $\lpp(S[i_{t'}.. j_{t'}])$ is not equal to $S[s.. e]$.
  Similar to the proof of Lemma~\ref{lem:leftmaximal},
  let $v = S[s'..e]$ be the shortest palindrome which is ending at $e$ and is longer than $w = S[s.. e]$.
  Namely, $w$ is the longest palindromic suffix of $v$.

  Next, consider a window $S[i_{\tilde{t}}.. j_{\tilde{t}}]$ at time $\tilde{t} < t$ with its starting position $i_{\tilde{t}} = s'$.
  Similar to the proof of Lemma~\ref{lem:leftmaximal}, again,
  let $u = S[s'..e']$ be the longest palindromic substring of the window $S[i_{\tilde{t}}.. j_{\tilde{t}}]$.
  Further let $c_w$, $c_v$, and $c_u$ be respectively the center of $w$, $v$, and $u$.
  From the assumptions, $c_v < c_w$ and $c_v < c_u$ hold.
  Next, we consider three sub-cases~(see also Fig.~\ref{fig:longest_prefix_pal_contradiction}):
  \begin{itemize}
    \item[(a)] If $c_u < c_w$, then the palindrome $u'$ ending at $e$ whose center equals $c_u$, is longer than $w$ and is shorter than $v$.
      This contradicts that $w$ is the longest palindromic suffix of $v$.
    \item[(b)] If $c_u = c_w$, then $|v| = e-s'+1 = e'-s+1$.
      Thus, $S[s.. e'] = S[s'..e] = v$ since $S[s'.. e']$ is a palindrome.
      This contradicts that $w$ is the longest palindromic prefix of the current window $S[i_t.. j_t] = S[s.. j_t]$.
    \item[(c)] If $c_w < c_u$, then the palindrome $u''$ starting at $s$ whose center equals $c_u$, is longer than $w$.
      This again contradicts that $w$ is the longest palindromic prefix of the current window $S[i_t.. j_t] = S[s.. j_t]$.
  \end{itemize} 
  Therefore, we have proved the lemma.
\end{proof}
\begin{figure}[h]
  \centering
  \includegraphics[width=\linewidth]{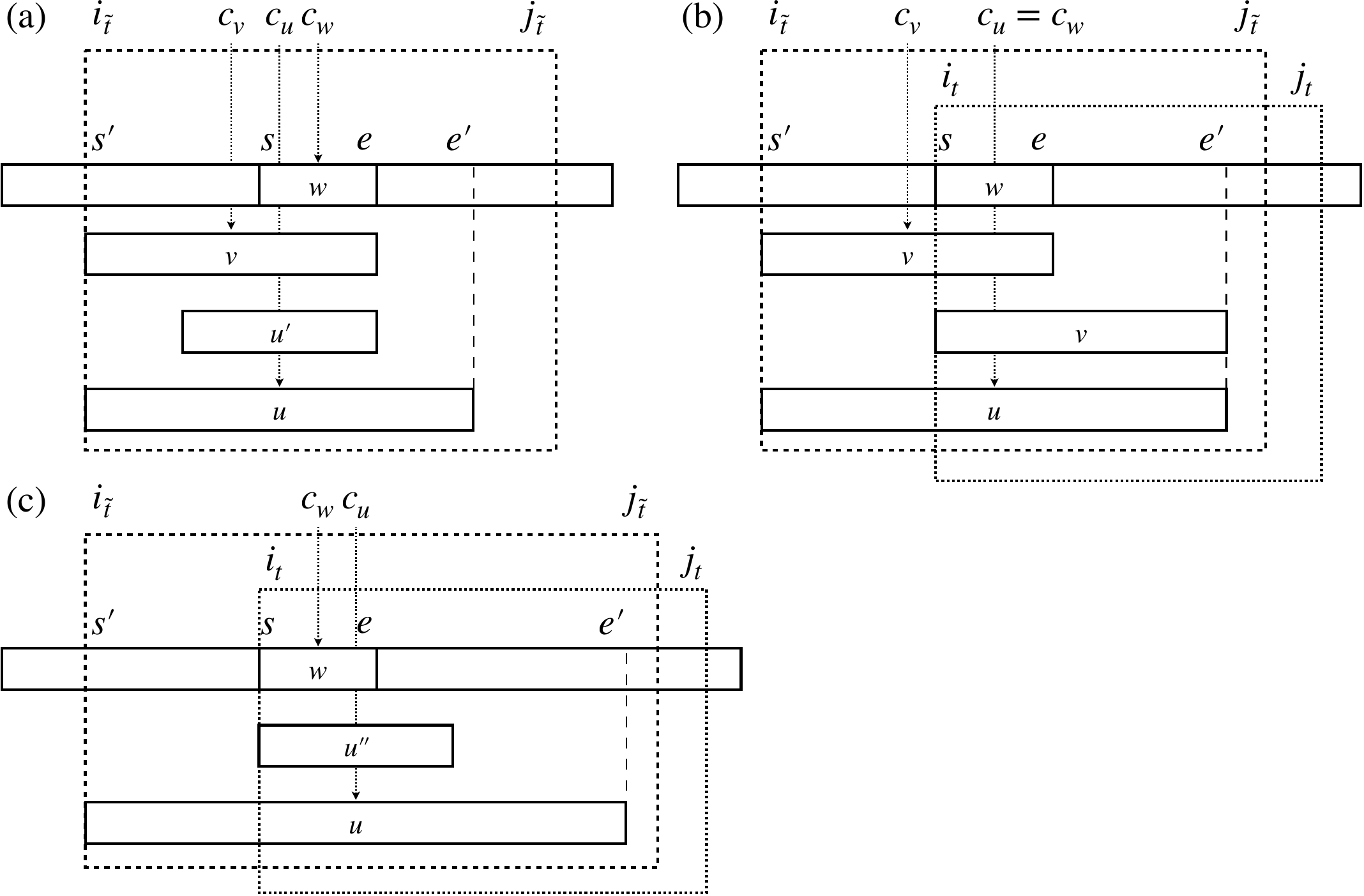}
  \caption{
    Illustration for contradictions in the proof of Lemma~\ref{lem:longestprefixpal}.
  }\label{fig:longest_prefix_pal_contradiction}
\end{figure}

\section{Eertree for a Sliding Window} \label{sec:eertree}
In this section, we show how to update a given eertree when we shift the sliding window to the right by one character.
Sliding a given window consists of two operations:
\emph{deleting} the leftmost character and \emph{appending} a character to the right end.
Namely, when the eertree of $S[i-1.. j-1]$ is given,
we first compute the eertree of $S[i.. j-1]$~(deleting the leftmost character $S[i-1]$),
and then, compute the eertree of $S[i.. j]$~(appending a character $S[j]$).
To update the eertree when a character is appended,
we can apply Rubinchik and Shur's 
algorithm~\cite{rubinchik2018eertree}
which constructs the eertree of a given string in an online manner.
In this section, we propose new additional data structures and algorithms
which update the eertree when the leftmost character is deleted.

We emphasize that our algorithms work for any valid\footnote{C.f., the definition of sliding windows in Section~\ref{subsec:slidingwindow}.} sequence of windows of arbitrary lengths.
However, for simplicity, we consider the case where a fix-sized window of length $d$ shifts to the right one-by-one throughout this section.
\subsection{Auxiliary Data Structures for Detecting the Node to be Deleted}
We introduce auxiliary data structures for computing the longest palindromic prefixes and for determining uniqueness of palindromes.
\paragraph{For Computing the Longest Palindromic Prefix.}
Let $\prefPal[0.. d-1]$ be a cyclic array of size $d$
such that $\prefPal[i_t \bmod d]$ stores the node which corresponds to the longest palindromic prefix of the window $S[i_t.. j_t]$ at each time $t$.
Namely, for every time $t$, $\prefPal[i_t\bmod d] = \node(\lpp(S[i_t.. j_t]))$ holds.
\paragraph{For Determining Uniqueness of a Palindrome.}
For each ordinary node $v$ in $\eertree(S[i.. j])$,
let $\rightmost_{i,j}(v)$ be the starting position of the rightmost occurrence of $\pal(v)$ in $S[i..j]$.
Further let $\secondrightmost_{i,j}(v)$ be the starting position of the second rightmost occurrence of $\pal(v)$ in $S[i..j]$ if such a position exists,
and otherwise, $\secondrightmost_{i,j}(v) = -1$.
Throughout the computation of the eertree for a sliding window,
for each node $v$ of $\eertree(S[i.. j])$ we keep the following invariant
$\BegPair_{i,j}(v)$ which consists of two fields $\first$ and $\second$ such that:
\begin{eqnarray*}
    \BegPair_{i,j}(v).\first & = & \begin{cases}
      \rightmost_{i,j}(v) & \text{if } \inSufLink(v) = 0,\\
    -1\text{ or some occurrence}\\
    \qquad\text{of }\pal(v)\text{ in }S[0.. j]& \text{otherwise.}
    \end{cases} \\
    \BegPair_{i,j}(v).\second & = & \begin{cases}
      \secondrightmost_{i,j}(v) & \text{if } \inSufLink(v) = 0\\
                                &\text{ and }\secondrightmost_{i.j}(v) \ne -1,\\
      -1\text{ or some occurrence}\\
      \qquad\text{of }\pal(v)\text{ in }S[0.. j]& \text{otherwise.}
    \end{cases}
\end{eqnarray*}
Namely, $\BegPair_{i,j}(v)$ stores the rightmost and second rightmost occurrences of $\pal(v)$ in $S[i.. j]$ when $\inSufLink(v) = 0$, if such occurrences exist.
Otherwise, it temporarily stores some pair of integers, however, it will never be referred in our algorithms.
In other words, we employ a kind of lazy maintenance of the rightmost and second rightmost occurrences of $\pal(v)$ in $S[i..j]$ that suffices for our purpose.
See Fig.~\ref{fig:BegW} for an example of $\BegPair_{i,j}(v)$.
\begin{figure}[h]
  \centering
  \includegraphics[width=0.5\linewidth]{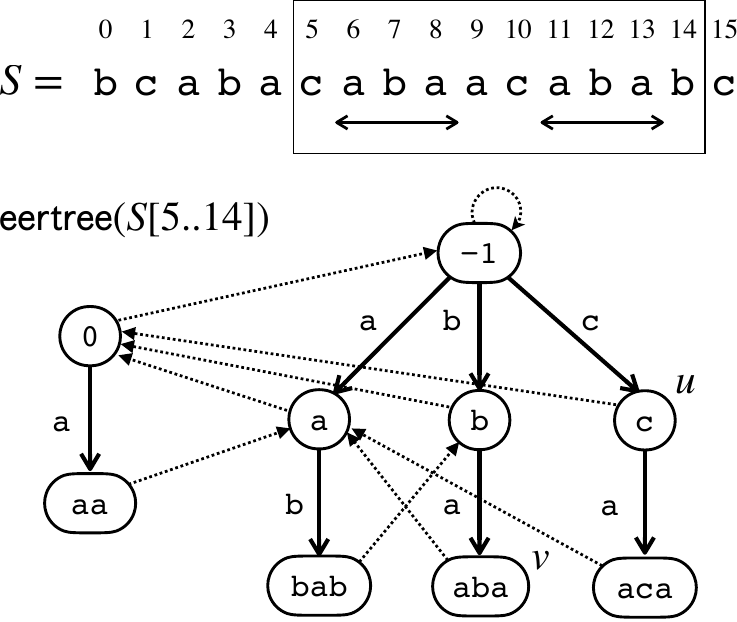}
  \caption{
    Examples for $\BegPair_{i,j}(v)$.
    For string $S = \mathtt{bcabacabaacababc}$ and window $[5, 14]$, the $\eertree(S[5.. 14])$ is depicted.
    Consider node $v$ in $\eertree(S[5.. 14])$ with $\pal(v) = \mathtt{aba}$.
    The rightmost and the second rightmost occurrences of $\mathtt{aba}$ in the window $S[5.. 14]$ are $11$ and $6$.
    Namely, $\rightmost_{5,14}(v) = 11$ and $\secondrightmost_{5, 14}(v) = 6$.
    Further, $\inSufLink(v) = 0$, and thus, $\BegPair_{5, 14}(v) = (11, 6)$.
    Also, for node $u$ in $\eertree(S[5.. 14])$ with $\pal(u) = \mathtt{c}$,
    $\BegPair_{5, 14}(u) = (10, 5)$ since $\rightmost_{5,14}(u) = 10$, $\secondrightmost_{5, 14}(u) = 5$, and $\inSufLink(u) = 0$.
    When the leftmost character $S[5] = \mathtt{c}$ is deleted from the window $S[5.. 14]$, $\secondrightmost_{5,14}(u)$ changes to $-1$.
    However, $\BegPair_{6, 14}(u) = \BegPair_{5, 14}(u) = (10, 5)$ is allowed since $\BegPair_{6, 14}(u).\second = 5 < 6$ is a valid value for our invariant.
    Namely, we do not have to update $\BegPair_{6, 14}(u)$ explicitly
    when deleting the leftmost character $S[5]$ from $S[5.. 14]$. 
  } \label{fig:BegW}
\end{figure}

The next lemma states that given a node $v$, we can determine if $\pal(v)$ is unique or not by checking the incoming suffix links of $v$ and $\BegPair_{i,j}(v)$.
\begin{lemma}\label{lem:unique}
  Let $v$ be any node in $\eertree(S[i.. j])$.
  Then, $\pal(v)$ is unique in $S[i.. j]$ if and only if
  $\inSufLink(v) = 0$ and $\BegPair_{i,j}(v).\second < i$.
\end{lemma}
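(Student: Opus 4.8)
The plan is to prove both directions of the biconditional by connecting the combinatorial notion of uniqueness to the two tracked quantities: the incoming suffix link count $\inSufLink(v)$ and the second field $\BegPair_{i,j}(v).\second$. The key observation underlying everything is that $\pal(v)$ is unique in $S[i.. j]$ precisely when it has exactly one occurrence in the window, which is equivalent to saying $\secondrightmost_{i,j}(v) = -1$, i.e. no second occurrence exists. So the real content is to show that the condition ``$\inSufLink(v) = 0$ and $\BegPair_{i,j}(v).\second < i$'' is equivalent to ``$\secondrightmost_{i,j}(v) = -1$''. The first step I would carry out is to recall that $\pal(v)$ is unique in $S[i..j]$ iff it is left-maximal in $S[i..j]$ and occurs only once; but more directly, I would argue that a repeating palindrome is never left-maximal, hence has a palindromic proper extension to its left, which forces a node with a suffix link pointing to $v$. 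Thus $\inSufLink(v) > 0$ whenever $\pal(v)$ is not left-maximal.

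For the forward direction ($\Rightarrow$), assume $\pal(v)$ is unique in $S[i.. j]$. First I would show $\inSufLink(v) = 0$: since $\pal(v)$ is unique it is left-maximal (by the remark preceding Lemma~\ref{lem:leftmaximal}, any non-left-maximal palindrome is repeating), and a suffix link into $v$ would come from a longer palindrome $\pal(u)$ having $\pal(v)$ as its longest palindromic proper suffix, contradicting left-maximality of $\pal(v)$ in $S[i..j]$ (such a $\pal(u)$ would give a palindromic occurrence of $\pal(v)$ as a proper suffix). Then, with $\inSufLink(v) = 0$ established, the invariant guarantees $\BegPair_{i,j}(v).\second = \secondrightmost_{i,j}(v)$. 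Since $\pal(v)$ is unique, there is no second occurrence in $S[i..j]$, so $\secondrightmost_{i,j}(v) = -1 < i$, which gives $\BegPair_{i,j}(v).\second < i$.

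For the converse ($\Leftarrow$), assume $\inSufLink(v) = 0$ and $\BegPair_{i,j}(v).\second < i$. Because $\inSufLink(v) = 0$, the invariant again tells us $\BegPair_{i,j}(v).\first = \rightmost_{i,j}(v)$ and $\BegPair_{i,j}(v).\second = \secondrightmost_{i,j}(v)$ whenever a second occurrence exists. I would argue by contraposition: if $\pal(v)$ were repeating in $S[i..j]$, it would have a genuine second rightmost occurrence whose starting position satisfies $\secondrightmost_{i,j}(v) \ge i$; but then the invariant forces $\BegPair_{i,j}(v).\second = \secondrightmost_{i,j}(v) \ge i$, contradicting $\BegPair_{i,j}(v).\second < i$. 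Hence $\pal(v)$ occurs at most once, i.e. it is unique in $S[i..j]$.

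The main obstacle I anticipate is handling the lazy/loose maintenance of $\BegPair$ cleanly: the invariant only pins down $\second$ to the true $\secondrightmost_{i,j}(v)$ under the hypothesis $\inSufLink(v)=0$ and when such a second occurrence actually exists, and otherwise allows stale or placeholder values. I must be careful that the condition $\BegPair_{i,j}(v).\second < i$ is being read only when $\inSufLink(v)=0$, so that the invariant's guarantee applies; the example in Fig.~\ref{fig:BegW} shows that a stale value such as $5$ for a second occurrence that has fallen out of the window is still a \emph{valid} witness for $\second < i$, so I need to state precisely that any value satisfying the invariant with $\second < i$ certifies that no in-window second occurrence exists. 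Establishing that the equivalence ``$\inSufLink(v) = 0$'' $\Leftrightarrow$ ``$\pal(v)$ is left-maximal in $S[i..j]$'' is robust to these lazy updates is the delicate point; once that is nailed down, the rest is a direct unfolding of the definition of uniqueness via $\secondrightmost$.
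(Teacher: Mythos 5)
Your proposal is correct and follows essentially the same route as the paper's proof: uniqueness forces left-maximality and hence $\inSufLink(v)=0$, after which the $\BegPair$ invariant ties the $\second$ field to $\secondrightmost_{i,j}(v)$ (the paper argues the forward direction by contraposition and the converse by contradiction, but the content is identical). The only slight imprecision --- which you already flag yourself --- is that in the forward direction the invariant does \emph{not} literally give $\BegPair_{i,j}(v).\second=\secondrightmost_{i,j}(v)=-1$ when $\pal(v)$ is unique (then $\secondrightmost_{i,j}(v)=-1$, so the ``otherwise'' branch applies and a stale value is permitted); one must add that any such stale value is an occurrence of $\pal(v)$ distinct from $\BegPair_{i,j}(v).\first$ and hence, by uniqueness, starts before $i$.
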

\begin{proof}
  ($\Rightarrow$)
  We show the contraposition. There are two cases:
  (1) If $\inSufLink(v) \ne 0$, then there is a palindromic substring $P$ of $S[i..j]$ with $\lps(P) = \pal(v)$.
  Namely, $\pal(v)$ is not left-maximal in $S[i..j]$, and thus, $\pal(v)$ is not unique in $S[i.. j]$.
  (2) If $\inSufLink(v) = 0$ and $\BegPair_{i,j}(v).\second = \secondrightmost_{i,j}(v) \ge i$,
  then $\pal(v)$ occurs at least twice in $S[i.. j]$ at positions $\BegPair_{i,j}(v).\second$ and $\BegPair_{i,j}(v).\first$.\\
  \noindent($\Leftarrow$)
  For the sake of contradiction, we assume that $\pal(v)$ is not unique in $S[i..j]$.
  Then, by the definition of $\secondrightmost$, $i \le \secondrightmost_{i,j}(v) \le j$.
  However, since $\inSufLink(v) = 0$ and $\secondrightmost_{i,j}(v) \ne -1$,
  $\secondrightmost_{i,j}(v) = \BegPair_{i,j}(v).\second < i$, a contribution.
\end{proof}
Next, we introduce our algorithms to maintain $\prefPal$ and $\BegPair$ for a sliding window
which utilizes combinatorial properties shown in Section~\ref{sec:comb}.
\subsection{Maintaining the Auxiliary Data Structures}
First, in Algorithm~\ref{alg:1}, we show subroutine $\updatebp$ which updates the member variable $v.\bp$ of a given node $v$
where $v.\bp$ must be kept equal to $\BegPair_{i_t,j_t}(v)$ at each time $t$.
It will be called in the algorithms that we show later.

\begin{algorithm}[t]
  \caption{$\updatebp(v, x)$.}
  \begin{algorithmic}[1]\label{alg:1}
    \REQUIRE Node $v$, and a starting position $x$ of $\pal(v)$.
    \ENSURE Update $v.\bp$ appropriately with respect to the position $x$.
    \IF {$x > v.\bp.\first$}
      \STATE $v.\bp.\second \leftarrow v.\bp.\first$
      \STATE $v.\bp.\first \leftarrow x$
    \ELSIF {$x > v.\bp.\second$}
      \STATE $v.\bp.\second \leftarrow x$
    \ENDIF
  \end{algorithmic}
\end{algorithm}
Next, we show our algorithms for updating data structures when we slide the given window.
When the leftmost character $S[i-1]$ is deleted from $S[i-1..j-1]$, our data structures are updated by Algorithm~\ref{alg:2}.
Also, when a character $S[j]$ is appended to $S[i..j-1]$, our data structures are updated by Algorithm~\ref{alg:3}.
\begin{algorithm}[t]
  \caption{Update $\BegPair$ and $\prefPal$ when the leftmost character is deleted.}
  \begin{algorithmic}[1]\label{alg:2}
    \REQUIRE
      $\lpsuf = \node(\lps(S[i-1.. j-1]))$, and\\
      $v.\bp = \BegPair_{i-1,j-1}(v)$ for each node $v$ in $\eertree(S[i-1.. j-1])$.
    \ENSURE
      $\lpsuf = \node(\lps(S[i.. j-1]))$, and\\
      $v.\bp = \BegPair_{i,j-1}(v)$ for each node $v$ in $\eertree(S[i.. j-1])$.
    \STATE $\lppref \leftarrow \prefPal[i-1]$
      \qquad $\mathtt{\backslash\backslash }\pal(\lppref) = \lpp(S[i-1.. j-1])$
    \IF {$\lppref = \lpsuf$}
        \STATE $\lpsuf \leftarrow \sufLink(\lpsuf)$
        \qquad$\mathtt{\backslash\backslash~}$\texttt{For the case the window is a palindrome}\\
    \ENDIF
    \STATE $q \leftarrow\sufLink(\lppref)$
    \STATE $\inSufLink(q) \leftarrow \inSufLink(q) - 1$
    \STATE $x \leftarrow i - 1 + \len(\lppref) - \len(q)$
      \qquad$\mathtt{\backslash\backslash}~x~$\texttt{is a starting position of }$\pal(q)$
    \STATE $\updatebp(q, x)$
    \IF {$\len(q) > \len(\prefPal[x])$}
      \STATE $\prefPal[x] = q$
    \ENDIF
    \IF {$\inSufLink(\lppref) = 0$ and $\lppref.\bp.second < i-1$.}
      \STATE Remove node $\lppref$ from the eertree
    \ENDIF
  \end{algorithmic}
\end{algorithm}
\begin{algorithm}[t]
  \caption{Update $\BegPair$ and $\prefPal$ when a character is appended.}
  \begin{algorithmic}[1] \label{alg:3}
    \REQUIRE
      $\lpsuf = \node(\lps(S[i.. j-1]))$, $S[j]$, and\\
      $v.\bp = \BegPair_{i,j-1}(v)$ for each node $v$ in $\eertree(S[i.. j-1])$.
    \ENSURE
      $\lpsuf = \node(\lps(S[i.. j]))$, and\\
      $v.\bp = \BegPair_{i,j}(v)$ for each node $v$ in $\eertree(S[i.. j])$.
    \STATE Compute $\lps(S[i.. j])$ and overwrite $\lpsuf \leftarrow \node(\lps(S[i.. j]))$
    \IF {$\lpsuf$ does not exist in $\eertree(S[i.. j-1])$}
      \STATE Add new node $\lpsuf$ to the eertree
    \ENDIF
    \STATE $y \leftarrow j-\len(\lpsuf)+1$
      \qquad$\mathtt{\backslash\backslash}~y~$\texttt{is a starting position of }$\lps(S[i..j])$
    \STATE $\updatebp(\lpsuf, y)$
    \STATE $\prefPal[y] \leftarrow \lpsuf$
  \end{algorithmic}
\end{algorithm}

\paragraph{Time Complexities.}
Clearly, Algorithm~\ref{alg:1} runs in constant time.
In Algorithm~\ref{alg:2}, all lines except for Line~13 can be processed in constant time.
Thus, the total running time of Algorithm~\ref{alg:2} is dominated by Line~13, i.e., $O(\log\sigma')$.
In Algorithm~\ref{alg:3},
the first four lines can be processed in amortized $O(\log\sigma')$ time by using the online construction algorithm~\cite{rubinchik2018eertree}.
Also, the remaining lines can be processed in constant time, and thus,  the total running time of Algorithm~\ref{alg:3} is amortized $O(\log\sigma')$.

\paragraph{Correctness.}
First, it is clear that Algorithm~\ref{alg:1} runs correctly.

Next, let us consider the correctness of Algorithm~\ref{alg:2}.
Let us first consider a special case when the window $S[i-1.. j-1]$ itself is a palindrome.
Then, we need to update $\lpsuf$, which will be used in Algorithm~\ref{alg:3}.
Lines~2--3 of Algorithm~\ref{alg:2} captures such a case.
Next, we show that $\BegPair$ for all nodes are updated correctly.
By Lemma~\ref{lem:leftmaximal}, it is suffice to update $v.\bp$ for every node $v$ where $\pal(v)$ is left-maximal.
Let $q$ be is the node corresponding to the longest palindromic suffix of $\lppref = \lpp(S[i-1.. j-1])$.
Then, it is suffice to update $q.\bp$ since the node $q$ is the only candidate for a node whose corresponding palindrome to be left-maximal \emph{just} in this step.
Thus, we update only $q.\bp$ in Lines~5--8, if it is needed.
Further, we show that $\prefPal$ is also updated correctly.
By Lemma~\ref{lem:longestprefixpal},
the longest palindromic prefix of a window must be the longest palindromic suffix of either some window or the longest palindromic prefix of some window.
The palindrome $\pal(q)$ is the only one which is to be such a palindrome \emph{just} in this step.
Thus, $\prefPal[x]$ is the only candidate which may be updated in this step
where $x$ is the starting position of the occurrence of $\pal(q)$
which is the longest palindromic suffix of $\lpp(S[i-1, j-1])$.
Therefore, it is suffice to update $\prefPal[x]$ and update it if necessary (Lines~9--11).
Line~12 determines the uniqueness of $\lpp(S[i-1.. j-1])$ correctly by using Lemma~\ref{lem:unique},
and if it is unique, then the corresponding node $\lppref$ is removed (in Line~13).

Finally, consider the correctness of Algorithm~\ref{alg:3}.
When a character is appended, we first check the new longest palindromic suffix,
and create a new node corresponding to the palindrome if necessary.
These procedures in Lines~1--4 are correctly performed by running the online construction algorithm~\cite{rubinchik2018eertree}.
Let $y$ be the starting position of the longest palindromic suffix of the window $S[i..j]$.
The palindrome $\lps(S[i..j])$ is the only candidate for a palindrome to be left-maximal \emph{just} in this step,
and thus, by Lemma~\ref{lem:leftmaximal}, it is suffice to update $\lpsuf.\bp$ in this step.
Also, by Lemma~\ref{lem:longestprefixpal},
$\lpsuf$ is the only candidate for the node that we need to newly store into $\prefPal$ in this step.
At this moment, $\lpsuf$ is clearly the longest palindrome starting at position $y$.
Thus, we set $\prefPal[y] \leftarrow \lpsuf$~(Line~7).

To summarize this section, we obtain the following theorem.
\begin{theorem} \label{thm:main}
  We can maintain eertrees for a sliding window
  in a total of $O(n\log\sigma')$ time using $O(\dpal) + d$ space
  where $\dpal \le d$ be the maximum number of distinct palindromes in all windows.
\end{theorem}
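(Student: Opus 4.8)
The plan is to combine the per-step correctness and running-time facts already established for Algorithms~\ref{alg:2} and~\ref{alg:3} into a single global bound over an entire sequence of window operations; the only genuinely new ingredient is an amortized analysis that must survive the \emph{interleaving} of deletions and appends. Correctness I would take as already in hand: by Lemmas~\ref{lem:removed_node}, \ref{lem:leftmaximal}, \ref{lem:longestprefixpal}, and~\ref{lem:unique} together with the correctness paragraphs above, after each delete step $v.\bp = \BegPair_{i,j-1}(v)$ and $\prefPal$ correctly encode the longest palindromic prefixes, and after each append step the online construction of~\cite{rubinchik2018eertree} correctly maintains the node set, $\lpsuf$, $\BegPair$, and $\prefPal$. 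Hence at every time $t$ the maintained object is exactly $\eertree(S[i_t.. j_t])$ with its auxiliary arrays, and it remains only to bound total time and space.

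For the time bound, first note that sliding a window across $S$ issues $O(n)$ delete and $O(n)$ append operations in total, so the process makes $O(n)$ calls to Algorithms~\ref{alg:2} and~\ref{alg:3}. Each delete runs in worst-case $O(\log\sigma')$ time, dominated by removing the node $\lppref$ in Line~13, which touches a balanced-BST child structure. For appends the per-step cost is only \emph{amortized} $O(\log\sigma')$, so I would set up a potential argument over the whole operation sequence with $\Phi = \len(\lpsuf)$, the length of the current longest palindromic suffix. Locating the new $\lps$ follows suffix links, each costing $O(\log\sigma')$ for an edge lookup and each strictly decreasing the candidate palindrome length, while appending one character can raise $\len(\lps)$ by at most $2$ (removing the first and last characters of the new $\lps$ leaves a palindromic suffix of the old window). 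Thus if $c_t$ suffix links are followed during append $t$ then $c_t \le \Phi_t^- - \Phi_t^+ + O(1)$, where $\Phi_t^-,\Phi_t^+$ denote the potential immediately before and after that append.

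The step I expect to be the main obstacle is showing that this amortization is not destroyed by the deletions sitting between consecutive appends, since a textbook online-construction analysis assumes a purely growing string. The key fact I would prove is that a delete \emph{never increases} $\Phi$: after removing the leftmost character, $S[i.. j-1]$ is a suffix of $S[i-1.. j-1]$, so every palindromic suffix of $S[i.. j-1]$ is also a palindromic suffix of $S[i-1.. j-1]$, giving $\len(\lps(S[i.. j-1])) \le \len(\lps(S[i-1.. j-1]))$; the only nontrivial case, when the window is itself a palindrome, is handled in Lines~2--3 of Algorithm~\ref{alg:2} by moving to $\sufLink(\lpsuf)$, which again only shortens $\lps$. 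Consequently, for consecutive appends $t$ and $t+1$ we have $\Phi_{t+1}^- \le \Phi_t^+$, and summing $c_t \le \Phi_t^- - \Phi_t^+ + O(1)$ over all appends telescopes to $\sum_t c_t \le \Phi_1^- + O(n) \le d + O(n) = O(n)$, using $\Phi \ge 0$, $\Phi_1^- \le d$, and that there are $O(n)$ appends. Multiplying by the $O(\log\sigma')$ edge-lookup cost gives $O(n\log\sigma')$ total time for all appends, and adding the $O(n\log\sigma')$ spent on deletes yields the claimed time.

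Finally, for the space bound I would argue that at every time the eertree has at most $\dpal + 2$ nodes, each storing $O(1)$ integers (including the $\BegPair$ field) together with a balanced-BST of outgoing edges whose sizes sum to the number of edges, which is itself $O(\dpal)$. The single array $\prefPal$ is a cyclic array of exactly $d$ entries. Summing these contributions gives $O(\dpal) + d$ space, which completes the proof.
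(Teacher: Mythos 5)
Your proposal is correct and follows essentially the same route as the paper: the paper's own proof is just a short summary that combines the per-operation analyses of Algorithms~\ref{alg:2} and~\ref{alg:3} (citing the online construction for the amortized append cost) with the $O(\dpal)+d$ space accounting. The only difference is that you explicitly verify that the append amortization survives interleaved deletions via the potential $\Phi=\len(\lpsuf)$, a detail the paper leaves implicit; that verification is sound, since deletion never increases $\len(\lps)$.
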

\begin{proof}
  When a character is appended to the right end of the window,
  we update the eertree itself by applying the online algorithm~\cite{rubinchik2018eertree},
  and update our auxiliary data structures by using Algorithm~\ref{alg:3}.
  When the leftmost character is deleted from the window,
  we update the eertree and the auxiliary data structures by using Algorithm~\ref{alg:2}.
  The total running time is $O(n\log\sigma')$.
  The space usage is $O(\dpal)$ words for the original eertree and the auxiliary member variable $v.\bp$ for each node $v$ of the eertree,
  plus $d$ words for the array $\prefPal[0..d-1]$.
\end{proof}

By applying a subtle modification to the above algorithm,
we obtain another variant of the algorithm (Theorem~\ref{thm:eertree_nsigma} below) which is faster than Theorem~\ref{thm:main}
when $\dpal \sigma < n \log \sigma'$, but using additional $(\dpal+1)\sigma$ space.

\begin{theorem} \label{thm:eertree_nsigma}
  We can maintain eertrees for a sliding window
  in a total of $O(n + \dpal\sigma)$ time using $(\dpal+1)\sigma + O(\dpal) + d \in O(d\sigma)$ space.
\end{theorem}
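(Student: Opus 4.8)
The plan is to reuse the entire machinery of Theorem~\ref{thm:main} verbatim and change only the representation of the outgoing edges of each eertree node. In the $O(n\log\sigma')$ algorithm the logarithmic factor enters in exactly two places: the online append procedure of Rubinchik and Shur~\cite{rubinchik2018eertree} invoked in Lines~1--4 of Algorithm~\ref{alg:3}, which repeatedly looks up a child edge by its label while walking suffix links, and the node removal in Line~13 of Algorithm~\ref{alg:2}, which must detach the deleted node from its parent by label. If each node stores its children in a direct-access array of size $\sigma$ indexed by the alphabet, then every such lookup, insertion, and deletion of an edge becomes $O(1)$ time. All the remaining bookkeeping (maintaining $\prefPal$, $\BegPair$, $\sufLink$, $\inSufLink$, and $\updatebp$) is already $O(1)$ per step, so with direct-access arrays every call of Algorithm~\ref{alg:2} and every call of Algorithm~\ref{alg:3} runs in amortized $O(1)$ time, the amortization of the append being the standard suffix-link argument summing to $O(n)$ over the whole run.

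For the space bound, observe that at any instant the current window has at most $\dpal$ distinct palindromes, so the eertree holds at most $\dpal-1$ ordinary nodes plus the $\zeronode$ and $\minusonenode$, i.e. at most $\dpal+1$ nodes, each carrying one size-$\sigma$ edge array; this accounts for the $(\dpal+1)\sigma$ term, while $O(\dpal)$ words hold the rest of the eertree together with each $v.\bp$ and $d$ words hold $\prefPal[0..d-1]$. Since $\dpal \le d$, the total is $O(d\sigma)$. The genuine difficulty is the time spent initializing these arrays: a newly created node would naively need its size-$\sigma$ array cleared in $\Theta(\sigma)$ time, and over a sliding window the same palindrome may be created and destroyed $\Theta(n)$ times, which would cost $\Theta(n\sigma)$ and defeat the bound.

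To avoid this I would keep a pool of edge arrays and recycle them, growing the pool lazily and allocating (and clearing once) a fresh array only when a node must be created while the pool is empty. The key observation making recycling free is that arrays enter and leave the pool already empty: by Lemma~\ref{lem:removed_node}(C) the node deleted on a left-deletion is always a leaf, so its edge array contains no valid entries and can be returned to the pool without any cleanup, while detaching it from its parent touches only the single parent entry indexed by $\pal(\lppref)[0] = S[i-1]$, which is $O(1)$; symmetrically, a node is created as a leaf, so it may take any array from the pool without reinitialization. Consequently no array is ever re-cleared, and the number of distinct arrays that are ever allocated and cleared is bounded by the maximum number of simultaneously live nodes, namely $\dpal+1$, for a one-time cost of $O(\dpal\sigma)$. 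Adding this to the $O(n)$ spent across all append and delete operations yields the claimed total time $O(n+\dpal\sigma)$. I expect the crux to be precisely this accounting: one must argue that the leaf invariant of Lemma~\ref{lem:removed_node} is exactly what guarantees pooled arrays stay clean, and that lazy growth caps the pool at $\dpal+1$ arrays so that the initialization cost never exceeds $O(\dpal\sigma)$.
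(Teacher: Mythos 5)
Your proposal is correct and follows essentially the same route as the paper: replace each node's balanced search tree of children with a size-$\sigma$ direct-access array to kill the $\log\sigma'$ factor, and avoid the $\Theta(n\sigma)$ reinitialization cost by recycling node objects, using the fact (Lemma~\ref{lem:removed_node}) that both removed and newly created nodes are leaves so their edge arrays never need re-clearing, capping the one-time initialization cost at $O(\dpal\sigma)$ for at most $\dpal+1$ objects.
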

\begin{proof}
  In the original eertrees,
  each node stores a binary search tree to maintain branches dynamically.
  Instead, we use an array of integers of size $\sigma$,
  which allows to add, delete, and search for a node pointer~(i.e., edge)
  labeled by a given character in constant time.
  Thus, the $\log\sigma'$ factor in our time complexity can be removed.
  On the other hand, we need $\sigma + O(1)$ space
  to represent each node object,
  and $\Theta(\sigma)$ time to initialize it.
  If we naively initialize such a node object when adding a new node,
  the total time complexity increases to $O(n\sigma)$.
  However, we can \emph{reuse} node objects
  that had been removed when deleting a character
  since such removed nodes and new nodes to be added are leaves,
  i.e., they do not have any child~(Lemma~\ref{lem:removed_node}).
  Thus, by reusing node objects,
  we do not need to initialize an array of size $\sigma$
  when adding a new leaf node.
  The total number of node objects to initialize is $\dpal+1$,
  and it costs $O(\dpal\sigma)$ total time to initialize them.
\end{proof}

\section{Applications of Eertrees for a Sliding Window}
In this section, we apply our sliding-window eertree
algorithm of Section~\ref{sec:eertree}
to computing minimal unique palindromic substrings and
minimal absent palindromic words for a sliding window.

\subsection{Computing Minimal Unique Palindromic Substrings for a Sliding Window}
A substring $S[i.. j]$ of $S$ is called a
\emph{minimal unique palindromic substring~(MUPS)} of $S$ if and only if
$S[i.. j]$ is a palindrome, $S[i.. j]$ is unique in $S$, and
$S[i+1.. j-1]$ is repeating in $S$.
We denote $\MUPS(S)$ the set of intervals corresponding to MUPSs of $S$,
i.e., $\MUPS(S) = \{[i, j] \mid S[i..j] \mbox{ is a MUPS of }S\}$.
For example,
palindromic substring $S[9.. 13] = \mathtt{bbabb}$ of string $S = \mathtt{aaababababbabb}$
is a MUPS of $S$ since $S[9.. 13] = \mathtt{bbabb}$ is unique in $S$
and $S[10..12] = \mathtt{bab}$ is repeating in $S$.

Now, we show Lemma~\ref{lem:MUPS_eertree} which states a relationship between eertrees and MUPSs.
Then, in Lemma~\ref{lem:offline_MUPS}, we show that all MUPS can be computed using eertrees in an offline manner.
\begin{lemma} \label{lem:MUPS_eertree}
  A string $w$ is a MUPS of $S$ if and only if
  there is a node $v$ in $\eertree(S)$ such that
  $\pal(v) = w$, $\pal(v)$ is unique in $S$ and $\pal(u)$ is repeating in $S$,
  where $u$ is the parent of $v$.
\end{lemma}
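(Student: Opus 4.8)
The plan is to prove both directions by directly unpacking the definition of MUPS and matching each of its three conditions to a corresponding structural fact about the node $v$ and its parent $u$ in $\eertree(S)$. Recall that a string $w$ is a MUPS of $S$ exactly when (i) $w$ is a palindrome, (ii) $w$ is unique in $S$, and (iii) the string obtained by stripping the first and last characters of $w$ is repeating in $S$. The key bridge I would exploit is the edge structure of the eertree: by the definition given in the preliminaries, $v$ has parent $u$ with edge label $c = \pal(v)[0]$ precisely when $\len(u) + 2 = \len(v)$ and $\pal(u) = \pal(v)[1..\len(v)-2]$, i.e.\ $\pal(u)$ is exactly $\pal(v)$ with its first and last characters removed.

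For the ($\Leftarrow$) direction, I would start from a node $v$ with $\pal(v) = w$ such that $\pal(v)$ is unique in $S$ and $\pal(u)$ is repeating in $S$ for its parent $u$. Condition (i) is immediate since every node of the eertree corresponds to a palindromic substring of $S$, so $w$ is a palindrome. Condition (ii) is exactly the hypothesis that $\pal(v)$ is unique. For condition (iii), I would invoke the edge relation above: since $u$ is the parent of $v$, we have $\pal(u) = \pal(v)[1..\len(v)-2] = w[1..|w|-2]$, and this string is repeating by hypothesis. Here I must handle the boundary cases carefully: when $\len(v) = 1$ the parent is the $\zeronode$ with $\pal = \varepsilon$, and when $\len(v) = 2$ the stripped string is $\varepsilon$; in both cases the empty string is repeating by the convention fixed in the preliminaries, so condition (iii) still holds. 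Hence $w$ is a MUPS.

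For the ($\Rightarrow$) direction, I would assume $w$ is a MUPS of $S$ and produce the required node. Since $w$ is a palindrome occurring in $S$ (it is unique, hence occurs), there is a node $v$ in $\eertree(S)$ with $\pal(v) = w$. Uniqueness of $w$ gives that $\pal(v)$ is unique in $S$. It remains to identify the parent $u$ of $v$ and show $\pal(u)$ is repeating. The parent is determined by the eertree edge structure: $u$ is the node with $\pal(u) = \pal(v)[1..\len(v)-2] = w[1..|w|-2]$, which exists because $w[1..|w|-2]$ is a palindromic substring of $S$ (being $w$ with equal first and last characters removed, and $w$ is a palindrome), so it has a corresponding node, and this node is exactly the parent of $v$ by the edge definition. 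By condition (iii) of the MUPS definition, $w[1..|w|-2] = \pal(u)$ is repeating in $S$, as required.

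The main obstacle I anticipate is not the core argument but the careful treatment of the short-length boundary cases, namely $\len(v) \in \{1, 2\}$, where the parent is an auxiliary node ($\zeronode$ or $\minusonenode$) and the stripped string is empty. The proof hinges on the convention that $\varepsilon$ is repeating in any string, which the preliminaries explicitly grant; I would make sure the edge-relation $\pal(u) = \pal(v)[1..\len(v)-2]$ is applied consistently with $\pal(\zeronode) = \pal(\minusonenode) = \varepsilon$ so that the argument goes through uniformly without a separate case split. Beyond that, the two directions are essentially a restatement of the definitions through the eertree's parent/edge relation.
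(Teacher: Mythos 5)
Your proposal is correct and follows essentially the same route as the paper's proof: both directions reduce to matching the three MUPS conditions against the eertree's parent/edge relation $\pal(u) = \pal(v)[1..\len(v)-2]$, with your version merely spelling out the short-length boundary cases (note only the harmless slip that a length-$1$ node's parent is the $\minusonenode$, not the $\zeronode$, though $\pal = \varepsilon$ and the repeating-by-convention argument goes through either way).
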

\begin{proof}
  \noindent ($\Rightarrow$)
  Since $w$ is a MUPS of $S$,
  it is clear that there is a node $v$ such that
  $\pal(v) = w$ and it is unique in $S$.
  Also, since $\pal(v) = w \ne \varepsilon$, $v$ has the parent $u$,
  which represents the string $w[1..|w|-2]$.
  By the definition of MUPS, $\pal(u) = w[1..|w|-2]$ is repeating in $S$.
  \noindent ($\Leftarrow$)
  Since the palindrome $\pal(v) = w$ is unique in $S$
  and $\pal(u) = w[1..|w|-2]$ is repeating in $S$, $w$ is a MUPS of $S$.
\end{proof}
\begin{lemma}\label{lem:offline_MUPS}
  Given $\eertree(S)$,
  we can compute $\MUPS(S)$ in $O(\DPal(S))$ time.
\end{lemma}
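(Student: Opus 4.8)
The plan is to turn the characterization in Lemma~\ref{lem:MUPS_eertree} into a single linear scan over the nodes of $\eertree(S)$. By that lemma, a node $v$ yields a MUPS exactly when $\pal(v)$ is unique in $S$ while $\pal(u)$ is repeating in $S$, where $u$ is the parent of $v$. Hence I would first equip every node with a flag recording whether its palindrome is unique or repeating, and then, in a second pass, test each node against its parent's flag and emit the corresponding interval.

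For the uniqueness flags I would invoke Lemma~\ref{lem:unique} with the full string $S = S[0..n-1]$, i.e.\ the window $[0, n-1]$: $\pal(v)$ is unique in $S$ iff $\inSufLink(v) = 0$ and $\BegPair_{0,n-1}(v).\second < 0$. Since the eertree we maintain already stores $\inSufLink(v)$ and $v.\bp$, each test costs $O(1)$, so all flags are obtained in $O(|\DPal(S)|)$ total time. (Given a plain eertree without the $\BegPair$ invariant, the same flags can instead be derived by computing, for every node, its number of occurrences in $S$: this is the standard eertree computation that initializes each node's count at construction time and then propagates counts toward the root of the suffix-link tree in reverse topological order, again in $O(|\DPal(S)|)$ time, after which ``unique'' means count $=1$ and ``repeating'' means count $\ge 2$, the empty palindrome being repeating by convention.)

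Next I would recover, for each node, a pointer to its parent. Because the edges of $\eertree(S)$ are oriented from parent to child, a single traversal from the two roots $\zeronode$ and $\minusonenode$ visits every node while remembering the node from which it was reached, giving each $v$ its parent $u$ in $O(|\DPal(S)|)$ time. Recall that the parent of a length-$1$ node is $\minusonenode$ and the parent of a length-$2$ node is $\zeronode$; in both cases $\pal(u) = \varepsilon$, which is repeating by convention, so unique single characters and unique length-$2$ palindromes are correctly reported. Finally, for every node $v$ flagged unique whose parent is flagged repeating, I output the interval $[b, b + \len(v) - 1]$, where $b$ is the unique starting position of $\pal(v)$ in $S$; since $\pal(v)$ is unique this position is exactly $\BegPair_{0,n-1}(v).\first$ (equivalently, the first-occurrence position stored when the node was created). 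Correctness is then immediate from Lemma~\ref{lem:MUPS_eertree}, and the procedure is two linear passes, hence $O(|\DPal(S)|)$ time.

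The step I expect to require the most care is guaranteeing that the unique/repeating decision for \emph{every} node fits within the $O(|\DPal(S)|)$ budget and is consistent with the MUPS definition at the boundary cases. Via Lemma~\ref{lem:unique} this is a constant-time per-node lookup and poses no real difficulty, but it relies on the eertree carrying the $\BegPair$ and $\inSufLink$ fields; if one instead relies on occurrence counts, the delicate point is performing the suffix-link propagation in the right order---lengths strictly decrease along suffix links, so the suffix-link relation forms a tree rooted at $\minusonenode$ and a reverse post-order traversal suffices---and remembering to treat $\varepsilon$ as repeating so that the minimal palindromes are not erroneously discarded.
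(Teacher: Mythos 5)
Your proposal is correct and follows essentially the same route as the paper: apply the characterization of Lemma~\ref{lem:MUPS_eertree} node by node, decide uniqueness in $O(1)$ per node via Lemma~\ref{lem:unique} using the stored $\inSufLink$ and $\BegPair$ fields, and read off each reported interval's start from $v.\bp.\first$, all in one traversal. The extra care you take with parent pointers, the $\varepsilon$-is-repeating convention at the $\zeronode$/$\minusonenode$ boundary, and the fallback via suffix-link occurrence counting is sound but goes beyond what the paper's (much terser) proof spells out.
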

\begin{proof}
  Given a node $v$, we can detect whether $\pal(v)$ is a MUPS or not in constant time by Lemma~\ref{lem:MUPS_eertree}.
  Also, the starting position of a palindrome $\pal(v)$ which is unique in $S$ is stored in $v.\bp.\first$.
  Therefore, we can compute $\MUPS(S)$ by a single traversal on $\eertree(S)$.
\end{proof}
Moreover, we can efficiently maintain MUPSs for a sliding window.
\begin{theorem} \label{thm:MUPS}
  We can maintain the set of MUPSs for a sliding window
  in a total of $O(n\log\sigma')$ time using $O(d)$ space.
\end{theorem}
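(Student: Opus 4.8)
The plan is to maintain $\MUPS(S[i_t..j_t])$ incrementally by reusing the sliding-window eertree machinery of Theorem~\ref{thm:main}, which already gives us, at each time $t$, the eertree together with the auxiliary structures $\prefPal$ and $v.\bp$ in $O(n\log\sigma')$ total time and $O(d)$ space. By Lemma~\ref{lem:MUPS_eertree}, being a MUPS is a purely local property of a node $v$: it depends only on whether $\pal(v)$ is unique and whether $\pal(u)$ is repeating, where $u$ is the parent of $v$. By Lemma~\ref{lem:unique}, uniqueness of $\pal(v)$ in the current window is decidable in $O(1)$ time from $\inSufLink(v)$ and $v.\bp.\second$ (compared against the current left boundary $i_t$). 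So the MUPS status of any single node can be tested in constant time at any moment. The remaining task is to detect \emph{which} nodes can change their MUPS status when the window shifts by one, and to update a stored collection of MUPS intervals accordingly without rescanning the whole eertree.

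First I would observe that when we append $S[j]$ (Algorithm~\ref{alg:3}) or delete $S[i-1]$ (Algorithm~\ref{alg:2}), only a bounded number of nodes change their uniqueness status, because the $\inSufLink$ and $\bp$ fields are touched at only $O(1)$ nodes per step. On an append, the online construction may create one new leaf $\lpsuf$ and update $\lpsuf.\bp$ and one incoming-suffix-link count; on a deletion, Algorithm~\ref{alg:2} decrements $\inSufLink(q)$, calls $\updatebp(q,\cdot)$, and possibly removes the leaf $\lppref$. Thus the set of nodes whose uniqueness could flip is of constant size per operation, and for each such node $v$ the MUPS status of $v$ itself and of each of its children (whose parent is $v$) may change. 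The subtlety is that a node may have many children, so I would argue that only the children relevant to the current step need re-examination: a child $w$ becomes or ceases to be a MUPS only if $\pal(w)$'s uniqueness flips or $\pal(v)$'s repeating status flips, and the latter affects all children of $v$ simultaneously. To keep this within the time budget, I would store the MUPS set in a balanced structure keyed by interval and, when $\pal(v)$ transitions between unique and repeating, update the MUPS membership of $v$'s children lazily.

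The main obstacle will be exactly this amortized handling of a node's children when the parent's repeating status changes: a single repeating-to-unique transition of $\pal(v)$ could in principle invalidate the MUPS status of all children of $v$ at once, which is not $O(1)$ per step if done naively. I expect the resolution to hinge on the structural fact from Lemma~\ref{lem:removed_node} that nodes are inserted and deleted only as leaves, so the parent-child relationships evolve predictably, and on a charging argument that bounds the total number of MUPS membership changes over the whole run by $O(n\log\sigma')$ (since each palindrome can enter and leave a window only a bounded number of times proportional to its lifetime, and each insertion/deletion is charged to an append or delete operation). Concretely, I would maintain for each node a flag recording whether it is currently a MUPS, update the flag of each node touched by Algorithms~\ref{alg:2} and~\ref{alg:3}, and propagate parent-status changes to children only when a child is itself visited, appealing to the amortized analysis to show the total cost stays within $O(n\log\sigma')$ and the space within $O(d)$.

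Once these updates are in place, the stored MUPS set is correct at every time $t$ by Lemma~\ref{lem:MUPS_eertree}, the per-step cost matches the eertree update cost of $O(\log\sigma')$ amortized, and the space is dominated by the $O(d)$-space eertree of Theorem~\ref{thm:main} plus an $O(d)$-size collection of MUPS intervals; summing over all $n$ steps yields the claimed $O(n\log\sigma')$ time and $O(d)$ space.
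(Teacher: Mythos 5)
Your overall architecture (a per-node $\ismups$ flag maintained on top of the Theorem~\ref{thm:main} machinery, with Lemma~\ref{lem:MUPS_eertree} for the local characterization and Lemma~\ref{lem:unique} for the $O(1)$ uniqueness test) matches the paper's, but there is a genuine gap at exactly the point you flag as ``the main obstacle,'' and the escape route you sketch does not work. First, your justification that only $O(1)$ nodes can change uniqueness per step --- namely, that the algorithm writes to the $\inSufLink$ and $\bp$ fields of only $O(1)$ nodes --- is not sufficient: the test of Lemma~\ref{lem:unique} compares $v.\bp.\second$ against the \emph{moving} boundary $i$, so a node's status could in principle flip without any of its fields being written. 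The argument the paper actually uses, and which you never state, is combinatorial: when $S[i-1]$ is deleted, only occurrences starting at $i-1$ disappear, i.e., only palindromic \emph{prefixes} of the old window lose an occurrence; and every palindromic prefix other than the longest and the second longest occurs also as a suffix of each of those two, hence has at least three occurrences and remains repeating after the deletion. So the only candidates whose unique/repeating status can change are $\lpp(S[i-1..j-1])$ and its longest palindromic proper suffix --- precisely the nodes $\lppref$ and $q=\sufLink(\lppref)$ already available in Algorithm~\ref{alg:2} (symmetrically, the longest and second longest palindromic suffixes on an append, available in Algorithm~\ref{alg:3}).

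Second, this same observation dissolves the parent-to-children propagation problem, so no lazy structure or charging argument is needed. The only nodes whose repeating status can flip in a step are those two, and a node whose palindrome occurs exactly once in the window has at most one child in the eertree (every child's palindrome must contain that single occurrence at its center, and it can be extended outward in only one way), so only $O(1)$ children ever require re-examination. By contrast, your fallback of propagating a parent's status change to children ``only when a child is itself visited'' would leave stale $\ismups$ flags, so the maintained set would be incorrect at intermediate times --- the theorem requires the MUPS set to be correct after every window shift --- and the amortized charging argument you appeal to is never actually supplied. Without the longest/second-longest palindromic prefix (and suffix) observation, the proof is incomplete.
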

\begin{proof}
  In addition to the eertree data structure described in Section~\ref{sec:eertree},
  we add 1-bit information $\ismups$ into each node.
  This bit $\ismups$ is set to $\mathtt{1}$
  if the node corresponds to a MUPS and to $\mathtt{0}$ otherwise.
  We first consider to delete the leftmost character $S[i-1]$ from $S[i-1.. j-1]$.
  In this case, only prefixes of $S[i-1.. j-1]$ are those
  whose number of occurrences in the sliding window change.
  We check the nodes corresponding to the longest and the second longest
  palindromic prefixes, and update $\ismups$ of them accordingly.
  We do not need to care about other palindromic prefixes
  since they must be repeating in $S[i.. j-1]$.
  Symmetrically, we can easily maintain $\ismups$ in the case
  when appending a character $S[j]$ to $S[i..j-1]$.
\end{proof}

\subsection{Computing Minimal Absent Palindromic Words for a Sliding Window}
A string $w$ is called a \emph{minimal absent palindromic word}~(\emph{MAPW})
of string $S$ if and only if
$w$ is a palindrome, $w$ does not occur in $S$, and
$w[1..|w|-2]$ occurs in $S$.
For example, palindrome $w = \mathtt{aabbaa}$ is a MAPW
of string $S = \mathtt{aaababababbabb}$
since $w$ does not occur in $S$ and 
$w[1..|w|-2] = \mathtt{abba}$ occurs in $S$ at position $8$.
For a relation between MAPWs and eertrees, the next lemma holds.
\begin{lemma} \label{lem:MAPW}
  For any non-empty string $w \in \Sigma^*$,
  $w$ is a MAPW of a string $S$ if and only if
  there is a node $u$ in $\eertree(S)$
  such that
  $\pal(u) = w[1..|w|-2]$, $\len(u) = |w|-2$, and $u$ does not have an edge labeled by $w[0]$.
\end{lemma}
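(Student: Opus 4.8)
The plan is to reduce the statement to a single structural fact about the edges of $\eertree(S)$: for a node $u$ and a character $c$, the node $u$ has an outgoing edge labeled $c$ if and only if the palindrome obtained by wrapping $\pal(u)$ with $c$ on both sides occurs in $S$. First I would prove this directly from the edge definition. An edge $(u,v)$ exists iff $\len(v) = \len(u)+2$ and $\pal(u) = \pal(v)[1..\len(v)-2]$, and it is labeled by $\pal(v)[0]$; since $\pal(v)$ is a palindrome, $v$ represents the \emph{unique} palindrome of length $\len(u)+2$ whose first character is $\pal(v)[0]$ and whose core is $\pal(u)$. For an ordinary node or the $\zeronode$ this palindrome is $c\,\pal(u)\,c$, while for the $\minusonenode$ (where $\pal(u) = \varepsilon$ and $\len(u) = -1$) it degenerates to the single character $c$. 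In every case a $c$-labeled edge out of $u$ exists iff this wrapped palindrome occurs in $S$.

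For the forward direction, suppose $w$ is a MAPW and set $c := w[0]$. Since $w$ is a palindrome, $w[|w|-1] = c$ and its core $p := w[1..|w|-2]$ is a palindrome which, by the definition of MAPW, occurs in $S$; hence there is a node $u := \node(p)$ with $\pal(u) = p$ and $\len(u) = |w|-2$, taking $u$ to be the $\zeronode$ when $|w| = 2$ and the $\minusonenode$ when $|w| = 1$. The wrapped palindrome of $\pal(u)$ by $c$ is exactly $w$, and $w$ is absent from $S$, so the structural fact gives that $u$ has no edge labeled $c = w[0]$, which is precisely the right-hand side.

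For the backward direction I would argue symmetrically: given a node $u$ with $\pal(u) = w[1..|w|-2]$, $\len(u) = |w|-2$, and no $w[0]$-edge, the core $w[1..|w|-2] = \pal(u)$ is a palindrome occurring in $S$, which is the third MAPW condition; and since the wrapped palindrome of $\pal(u)$ by $c := w[0]$ is $w$ itself, the contrapositive of the structural fact yields that $w$ does not occur in $S$, so $w$ is a MAPW. The step I expect to be the most delicate is exactly this use of palindromicity: the right-hand side by itself only forces the \emph{core} $w[1..|w|-2]$ to be a palindrome, so the characterization must be read for palindromic $w$ (equivalently, for the canonical wrapped candidate $w = c\,\pal(u)\,c$ attached to a node $u$ and a missing edge label $c$), as is already built into the definition of a MAPW. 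The only other care needed is bookkeeping in the short cases $|w| \le 2$, where the core is empty and $u$ is an auxiliary node, so the edge correspondence must be checked against the $\zeronode$ and $\minusonenode$ conventions (in particular the degenerate wrap at the $\minusonenode$) rather than against an ordinary parent–child edge.
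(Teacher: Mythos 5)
Your proof takes essentially the same route as the paper's: both directions reduce to the observation that a $c$-labeled edge out of $u$ exists iff $c\,\pal(u)\,c$ occurs in $S$, combined with the fact that the core $w[1..|w|-2]$ occurs in $S$ iff its node is present in $\eertree(S)$. Your extra care is warranted and in fact slightly sharper than the paper's own two-line argument: the paper's backward direction silently assumes $w$ is a palindrome (the right-hand side alone does not force $w[|w|-1]=w[0]$, so the lemma must be read for palindromic $w$, as you note), and your explicit treatment of the $\zeronode$/$\minusonenode$ degenerate cases for $|w|\le 2$ is left implicit in the paper.
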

\begin{proof}
  \noindent
  ($\Rightarrow$)
  Since $w$ is a MAPW, $w[1..|w|-2]$ is a palindromic substring of $S$,
  and thus, there is a node $u$ with $\pal(u) = w[1..|w|-2]$ and $\len(u) = |w|-2$.
  Also, since the palindrome $w$ does not occur in $S$,
  $u$ does not have an edge labeled by $w[0]$.
  \noindent
  ($\Leftarrow$)
  Since $u$ is a node in $\eertree(S)$, the string $\pal(u) = w[1..|w|-2]$ occurs in $S$.
  Also, since $u$ does not have an edge labeled by $w[0]$,
  the string $w$ does not occur in $S$.
  Thus, $w$ is a MAPW of $S$.
\end{proof}

In order to maintain the set of MAPWs on top of $\eertree(S)$,
we store an array $M_v$ of size $\sigma$ for each node $v$ in $\eertree(S)$
where $M_v[c] = \mathtt{0}$ if $v$ has an edge labeled by $c$
and $M_v[c] = \mathtt{1}$ otherwise.
By Lemma~\ref{lem:MAPW},
$M_v[c] = \mathtt{1}$ iff $c\pal(v)c$ is a MAPW of $S$.
It is easy to see that $M_v$ for all nodes $v$~(i.e., all MAPWs of $S$) can be computed
by traversing $\eertree(S)$ only once.
Thus, the next corollary holds.
\begin{corollary}
  The number of MAPWs of $S$ is at most $(|\DPal(S)|+1)\sigma$.
  Also, given $\eertree(S)$, the set of MAPWs of $S$ can be computed
  in $O(|\DPal(S)|\sigma)$ time.
\end{corollary}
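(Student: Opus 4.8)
The plan is to read off both claims directly from Lemma~\ref{lem:MAPW}, which already pins MAPWs down in terms of the nodes of $\eertree(S)$ and their missing outgoing edges. First I would make the correspondence precise: to a MAPW $w$ I assign the pair $(u, c)$ where $u = \node(w[1..|w|-2])$ is the unique node with $\pal(u) = w[1..|w|-2]$ and $\len(u) = |w|-2$, and $c = w[0]$ is the absent edge label guaranteed by Lemma~\ref{lem:MAPW}. Conversely, any pair $(u, c)$ in which $u$ has no outgoing edge labeled $c$ yields the palindrome $c\,\pal(u)\,c$ (read as the single character $c$ when $u$ is the $\minusonenode$), which Lemma~\ref{lem:MAPW} certifies to be a MAPW. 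Since $u$ is recovered from $w[1..|w|-2]$ and $c$ from $w[0]$, this assignment is a bijection; equivalently, the MAPWs are in one-to-one correspondence with the entries $M_u[c] = \mathtt{1}$.

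For the counting bound I would then simply tally these entries. The $\eertree(S)$ has $|\DPal(S)|-1$ ordinary nodes together with the $\zeronode$ and the $\minusonenode$, hence $|\DPal(S)|+1$ nodes in total, and each node $u$ has at most $\sigma$ labels $c$ with $M_u[c] = \mathtt{1}$ (exactly those not used by an outgoing edge). Summing over all nodes gives at most $(|\DPal(S)|+1)\sigma$ MAPWs, as claimed.

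For the time bound I would traverse $\eertree(S)$ once; at each node $u$ I scan the $\sigma$ possible labels, and for each label lacking an outgoing edge I emit the corresponding MAPW in the compact form $(u, c)$, from which $c\,\pal(u)\,c$ is recoverable by walking edges toward the root. This spends $O(\sigma)$ time per node, for $O((|\DPal(S)|+1)\sigma) = O(|\DPal(S)|\sigma)$ in total.

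The only delicate point — more bookkeeping than obstacle — is the treatment of the two auxiliary nodes, since both represent $\varepsilon$: the $\zeronode$ accounts for the length-$2$ MAPWs $cc$, whereas the $\minusonenode$ accounts for the length-$1$ MAPWs (the absent single characters). Both must be counted to reach the node total $|\DPal(S)|+1$, and I would verify that the bijection stays injective across them, which it does because the MAPWs they generate have different lengths and hence cannot coincide.
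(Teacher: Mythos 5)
Your proposal is correct and follows essentially the same route as the paper: it invokes Lemma~\ref{lem:MAPW} to identify MAPWs with pairs $(u,c)$ where node $u$ lacks an outgoing edge labeled $c$ (the entries $M_u[c]=\mathtt{1}$), counts $|\DPal(S)|+1$ nodes times $\sigma$ labels, and computes them all by one traversal spending $O(\sigma)$ per node. Your explicit treatment of the $\zeronode$ versus the $\minusonenode$ (length-$2$ versus length-$1$ MAPWs) is a careful detail the paper leaves implicit, but it is the same argument.
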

Also, we can maintain MAPWs for a sliding window by applying Theorem~\ref{thm:eertree_nsigma}.
\begin{corollary}
  We can maintain the set of MAPWs for a sliding window
  in a total of $O(n + d\sigma)$ time using $O(d\sigma)$ space.
\end{corollary}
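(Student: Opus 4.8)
The plan is to observe that the MAPW information is already encoded, essentially for free, inside the eertree maintained by Theorem~\ref{thm:eertree_nsigma}, so that almost no work beyond that theorem is required. Recall that in the array-based eertree of Theorem~\ref{thm:eertree_nsigma} every node stores an array of size $\sigma$, indexed by characters, whose $c$-th entry holds the pointer to the $c$-child of the node if that edge exists and is null otherwise. By Lemma~\ref{lem:MAPW}, for a node $u$ and a character $c$ the palindrome $c\,\pal(u)\,c$ is a MAPW of the current window exactly when $u$ has no edge labeled $c$, i.e., exactly when the $c$-th entry of $u$'s branch array is null. (The length-one and length-two MAPWs are captured the same way by reading the branch arrays of the $\minusonenode$ and the $\zeronode$, since there $\pal(u) = w[1..|w|-2] = \varepsilon$ always occurs and $\len(u) = |w|-2$ disambiguates the two cases.) In the notation introduced above, the array $M_v$ is precisely the indicator of the null entries of $v$'s branch array, so $M_v$ need not be stored separately: the collection of branch arrays maintained by Theorem~\ref{thm:eertree_nsigma} is itself a complete representation of the MAPW set, testing any candidate $c\,\pal(u)\,c$ costs $O(1)$, and no structure beyond the eertree of Theorem~\ref{thm:eertree_nsigma} is needed, so the space stays $O(d\sigma)$.

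It therefore suffices to run the sliding-window eertree algorithm of Theorem~\ref{thm:eertree_nsigma} and read its branch arrays through Lemma~\ref{lem:MAPW}; the remaining task is to check that each window shift keeps this representation consistent at the single-entry level. When a character is appended, at most one new palindrome is created as a leaf $v$ and linked to its parent $u'$ by one edge labeled $c = \pal(v)[0]$; this flips exactly one null entry of $u'$'s array to non-null, matching the disappearance of the MAPW $c\,\pal(u')\,c = \pal(v)$. Symmetrically, when the leftmost character is deleted, Lemma~\ref{lem:removed_node} guarantees that at most one leaf $\ell$ is removed together with its incoming edge, flipping one entry of $\ell$'s parent back to null and making $\pal(\ell)$ a MAPW again. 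These are the only pointwise changes, and each is handled in $O(1)$ by the branch operations of Theorem~\ref{thm:eertree_nsigma}.

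The point requiring the most care — and the reason the bound is $O(n + d\sigma)$ rather than $O(n\sigma)$ — is the block of $\sigma$ MAPWs of the form $c\,\pal(\ell)\,c$ attached to every leaf $\ell$: adding or removing a leaf conceptually inserts or deletes all $\sigma$ of them at once. I would stress that we never enumerate this block. A freshly created leaf reuses a node object that, by the reuse argument of Theorem~\ref{thm:eertree_nsigma}, already carries an all-null array (leaves have no children), so its $\sigma$ implicit MAPWs are present with no per-step $\Theta(\sigma)$ work; a deleted leaf likewise just releases its all-null object. The only $\Theta(\sigma)$ initializations are the $\dpal+1$ one-time array set-ups already charged in Theorem~\ref{thm:eertree_nsigma}, totalling $O(\dpal\sigma)$. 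Consequently the whole procedure inherits the $O(n + \dpal\sigma) = O(n + d\sigma)$ time and $O(d\sigma)$ space of Theorem~\ref{thm:eertree_nsigma}, proving the corollary; here \emph{maintaining the set} means keeping this implicit array representation up to date, not emitting the $\Theta(\sigma)$-sized delta triggered by each leaf event.
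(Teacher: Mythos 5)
Your proposal is correct and follows essentially the same route as the paper: apply Theorem~\ref{thm:eertree_nsigma} and read the MAPWs off the per-node branch arrays via Lemma~\ref{lem:MAPW}, with the node-object reuse argument absorbing the $\Theta(\sigma)$ initialization cost into the $O(\dpal\sigma)$ term. Your added observations (that $M_v$ is just the complement of the branch array and so costs nothing extra, and that each window shift changes only $O(1)$ entries) are correct elaborations of what the paper leaves implicit.
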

 \clearpage

\end{document}